\documentclass[11pt,a4paper]{article}
\usepackage[utf8]{inputenc}
\usepackage{authblk}
\usepackage{amsfonts}
\usepackage{amscd}
\usepackage{amsmath}
\usepackage{theorem}
\usepackage{mathrsfs}
\usepackage{marvosym}
\usepackage{fancybox,amssymb}
\usepackage{geometry}
\usepackage[english]{babel}
\usepackage{graphicx}
\usepackage{enumerate}
\usepackage{xcolor}
\usepackage{epstopdf}

\newcommand{\pia}{\pi_{\mathrm{all}}}
\newcommand{\pii}{\pi_{\mathrm{ind}}}

\newcommand{\mE}{\mathbb{E}}
\newcommand{\E}{\mathbb{E}}
\newcommand{\md}{\,{\rm d}}

\newcommand{\s}{\sum\limits}

\newcommand{\one}{1\mkern-5mu{\hbox{\rm I}}}

\theoremstyle{break}
\theorembodyfont{}
\newtheorem{Def}{Definition}[section]
\newtheorem{Rem}[Def]{Remark}

\newtheorem{Lem}[Def]{Lemma}
\newtheorem{Prop}[Def]{Proposition}

\newtheorem{Thm}[Def]{Theorem}
\newtheorem{Bsp}[Def]{Example}

\makeatletter
\newenvironment{proof}{\noindent{\textit{Proof:}}}{%
\unskip\nobreak\hfil\penalty50\hskip1em\null\nobreak
$\Box$
\parfillskip=\z@\finalhyphendemerits=0\endgraf\bigskip}

\let\oldendBsp\endBsp
\def\endBsp{\unskip\nobreak\hfil\penalty50\hskip1em\null\nobreak\hfil%
$\blacksquare$\parfillskip=\z@\finalhyphendemerits=0\endgraf\oldendBsp}
\let\oldendBem\endBem
\def\endBem{\unskip\nobreak\hfil\penalty50\hskip1em\null\nobreak\hfil%
$\blacksquare$\parfillskip=\z@\finalhyphendemerits=0\endgraf\oldendBem}
\makeatother
\date{}

\title{Maximising with-profit pensions without guarantees}
\author{M. Carmen Boado-Penas\footnote{\Letter\;  carmenbo@liverpool.ac.uk}\quad\quad Julia Eisenberg\footnote{ jeisenbe@fam.tuwien.ac.at}\quad\quad Paul Kr\"uhner \footnote{peisenbe@liverpool.ac.uk}\\\vspace{-0.3cm}\footnotesize \hspace{0.7cm} University of Liverpool\quad\quad\quad\quad\quad\;  TU Wien \quad\quad \quad\quad University of Liverpool}
\normalsize
\begin{document} 
\maketitle
\begin{abstract}
\noindent
Currently, pension providers are running into trouble mainly due to the ultra-low interest rates and the guarantees associated to some pension benefits. With the aim of reducing the pension volatility and providing adequate pension levels with no guarantees, we carry out mathematical analysis of a new pension design in the accumulation phase. The individual's premium is split into the individual and collective part and invested in funds. In times when the return from the individual fund exits a predefined corridor, a certain number of units is transferred to or from the collective account smoothing in this way the volatility of the individual fund. The target is to maximise the total accumulated capital, consisting of the individual account and a portion of the collective account due to a so-called redistribution index, at retirement by controlling the corridor width. We also discuss the necessary and sufficient conditions that have to be put on the redistribution index in order to avoid arbitrage opportunities for contributors. \\\vspace{6pt}
\noindent
\\{\bf Key words:} Pensions, Collective mechanism, Optimisation, Redistribution index, Volatility smoothing.\\
\settowidth\labelwidth{{\it 2010 Mathematical Subject Classification: }}%
                \par\noindent {\it 2010 Mathematical Subject Classification: }%
                \rlap{Primary}\phantom{Secondary}
                93E20\newline\null\hskip\labelwidth
                Secondary 91B30, 91B08, 90B50 
\\\textit{JEL Classification:} C61, G22, G52, J26              
\end{abstract}
\section{Introduction}
Pensions are in constant flux as insurers need to reinvent their products in an environment with continuous increases in longevity and ultra-low interest rates. At the same time employees desire security in retirement in the sense that they could get the retirement income they expect due to their past and current contributions into a pension scheme.\\
With-profits contracts (or participating policies in the US) were historically a significant part of the UK life insurance product palette. With-profits contract generally consists of a benefit if the individual dies within the term (term insurance) and a lump sum (pure endowment) if the policyholder survives within the term. This allows the policyholder to build up funds for a specific purpose such as an income in retirement. The important difference in this type of contracts is that additional periodic return can be given to the policyholder. In order to remove the short-term volatility of policyholder's payout value different smoothing mechanisms are applied in practice.\footnote{See Goecke (2013) and Guill\'en et al. (2016).} A with-profits investment\footnote{See Gatzert and Schmeiser (2013).} can either be conventional or unitised with the latter buying units in the with-profits fund. In the past, with-profits contracts often contained guarantees, like for example minimum guaranteed return, which allowed just for low-risk investments resulting in a lower expected value of the final accumulated amount.\\
With the aim of meeting consumer's needs in terms of stability after retirement, the dynamic hybrid life insurance\footnote{For more details on dynamic hybrid products see Bohnert et al (2014).} offers guarantees achieved by a periodical rebalancing process between three funds (the policy reserves, a guarantee fund and and equity fund). However, the investments are still made on low and average-risk products.\\
Currently, under the ultra-low interest rate economic environment, which significantly reduces the long-term benefits, the insurers try to avoid guarantees associated to pension products. Over the past few decades, in occupational pensions, traditional defined benefit (DB) plans are gradually losing their dominance and there has been a shift towards defined contribution (DC) pensions, where the investment risk is completely shifted from the insurer to the clients. Under a DC scheme the level of the pension is uncertain and, in general, without higher contribution rates will not produce decent benefits.\\
For workplace and private pension plans, collective defined contribution (CDC) schemes offer a middle ground between DB and DC plans. Under CDC, contributions are pooled and managed on a collective basis, and members own a proportional share of the aggregated collective investment rather than individuals share of the underlying assets as for the case of individual DC.\footnote{Collective pension schemes are also the dominant form of saving for retirement in countries such as the Netherlands and Norway – countries recognised as having among the best pension systems in the world according the Melbourne Mercer Global Pension Index (2019). See Bovenberg (2009), Hoevenaars and Ponds (2008), Ponds and Van Riel (2009) and Binsbergen et al. (2014), amongst others. In the UK, the Pension Scheme Act (2015) sets up a new legislate framework for private pensions encouraging shared risk pension schemes and collective benefits.} The plan also has a target pension amount – rather than a contractual guarantee-based on a long term and mixed risk investment plan. The way CDC adjusts the level of current and prospective pensions mean that there is an element of cushioning (smoothing) of volatility and much better long-term protection as the risk is shared by the members. This is because investment risk is adjusted over time and longevity risk is pooled across the membership. However, CDC entails some significant challenges particularly regarding the communication of the benefit calculation to members\footnote{Members have no control over the attribution of losses and surpluses.}, complex governance decision for trustees and high running costs that are likely to make it suitable only for larger schemes.\\
There is a clear need of security in retirement, i.e. satisfactory and stable pension benefits but at the same time the pension providers do not want to offer guarantees under the current low-interest rate environment.\\
With the aim of reducing the pension volatility and providing satisfactory pension levels, in this paper, we analyse a new pension design in the accumulation phase where the individual's premium is split into two accounts: individual account and collective account. Similar to unitised with-profit products, the premia in both accounts are invested in funds (the same or different ones). Depending on the performance of the individual fund, some units are transferred from or to the collective fund. In this way, the collective account acts as a buffer in smoothing mechanism for individual accounts. At retirement age, the individual receives a lifelong pension (or a lump sum payment) linked to her individual account and a portion of the collective account according to a so-called redistribution index, a weight identifying the part of the collective account belonging to a client according to her premium payment evolution. This paper builds upon Boado-Penas et al. (2019) and addresses the mathematical aspects to determine the optimal corridor for the exchange of units between the individual and collective accounts so that the final accumulated capital at retirement age is maximised. As a second objective, this paper discusses necessary and sufficient conditions for modelling the redistribution index.  Indeed, a thoughtless choice of the model might result in an arbitrage opportunity for some members of the pool of contributors.\\
Following this introduction, the next section of the paper describes the proposed pension model in the accumulation phase. Section 3 describes the mathematical model of the product and defines the target functional to maximise. Depending on the chosen help/gain-sharing procedures the optimal strategy for individual accounts has a different structure. First, we analyse a relatively simple case when the collective fund can never be empty, which corresponds to the full guarantee case. Second, we assume that it is not possible to get any help from the collective account if the total number of help units required by individual accounts exceeds the number of units in the collective account. Finally, we propose to use the redistribution index in order to specify the amount of help the individual accounts are entitled to require if the collective account cannot cover all accumulated individual claims. Section 4 provides a theoretical discussion on choice of a model for the redistribution index. Section 5 concludes and make suggestions for further research.
\section{The Model}
This section presents the mathematical formulation to determine the optimal corridor for the exchange of units between the individual and the collective account so that the total saved amount for the individual at retirement age is maximised. As optimisation criteria, we discuss the reasonability and mathematically feasibility of the optimal mean and optimal mean-variance.\\
For simplicity, we assume that the individual and collective funds are modelled by the same Geometric Brownian motion, $H_t$, where
\begin{equation*}
H_t=e^{x+\mu t+\sigma W_t}
\end{equation*}
with $W_t$ being a standard Brownian motion and the return of the funds expressed as
\[
\rho_t := \frac{H_t}{H_{t-1}}-1\;.
\]

\noindent We denote by $k$ and $-k$ the corridor boundaries of the individual fund, and $k\in[0,1]$. The returns exceeding the upper corridor boundary are partially distributed from the individual funds to the collective fund while the losses (negative returns falling out the lower corridor boundary) are partially compensated from the collective fund.

\noindent Let denote by $V^j_t$ the value of the $j$th individual account at time $t$, by $\eta^j_t$ the number of shares that belongs to the $j_t$ individual at time $t$ and by $C_t$ the value of the collective account at time $t$ with number of shares $\theta_t$.\\
The mathematical formulation of the with-profit procedure is as follows: 
\begin{itemize}
\item If $\rho_t > k$, then we say that the fund overperformed and a fraction of the surplus is transferred from the individual to the collective account
\[
\frac14 \Big(H_t-H_{t-1}(1+k)\Big)\eta_{t-1}=\frac14 V_{t-1}\Big(\frac{H_t}{H_{t-1}}-1-k\Big)=
\frac14 V_{t-1}\big(\rho_t-k\big),
\] 
i.e.\ one transfers $\frac14\Big(1-\frac{H_{t-1}}{H_t}(1+k)\Big)\eta_{t-1}$ units of $H_t$ into the collective account.
\item If $\rho_t < -k$, then we say that the fund overperformed and in this case the individual account creates a claim from the collective account of
\[
\frac12 \Big(H_{t-1}(1-k)-H_t\Big)\eta_{t-1}=\frac12 V_{t-1}\Big(1-k-\frac{H_t}{H_{t-1}}\Big)=\frac12 V_{t-1}\big(-k-\rho_t\big)\;,
\] 
i.e $\frac12\Big(\frac{H_{t-1}}{H_t}(1-k)-1\Big)\eta_{t-1}$ units will be transferred into the individual account.
\end{itemize}


 \noindent The corridor boundaries make the funds change smoother. 
 It is clear that, if compared to the case with no barriers, this will reduce the realised volatility.\\
The transactions described above involve units of the collective fund in and out. This mechanism faces a problem if the collective account is plundered by individual accounts too often. It would mean that the individual accounts are not profitable and need a continuous support. For this reason the choice of barrier is restricted to those $k\in[0,1]$ such that the collective account does not loose money in expectation. This leads to the following profitability condition:\bigskip
\\\textbf{Profitability condition:} The set of admissible $k\in[0,1]$ is given by those $k$ such that 
\begin{align}
\mE\Big[\frac 12 \Big(1-k-\frac{H_t}{H_{t-1}}\Big)^+-\frac 14 \Big(\frac{H_t}{H_{t-1}}-1-k\Big)^+ \Big]\le 0\;.\label{profit}
\end{align} 
The idea of profitability is not new. For instance, in ruin theory the net profit condition, that states that the expected total loss should be strictly smaller than the expected earnings from premia payments, is required.\footnote{See Dickson (2005).}
\\At the end of the accumulation phase, at the retirement point, the total saved amount will consist of two parts: the total saved amount from the individual account and a part from the collective account. The part from the collective account is calculated due to a so-called redistribution index, which we denote by $J^j_{T-1}$ for the $j$th contract. In order to prevent arbitrage, at time $T$ we use the redistribution index determined at time $T-1$. 
The index is only updated during the accumulation phase when new contributions (premia payments) are made and constant otherwise which means that it is discrete in nature. 
For a more detailed discussion on choices for the index see Section \ref{reindex}. 
\\
The redistribution index might also be needed in order to determine the number of units to be transferred into an individual account in case of underperformance if the collective account does not have enough units to cover all occurred claims. This is critical as there is a lack of analysis in both academic research and regulation with respect to the strategy the insurance company should adopt in this case. We intend to fill in this gap through some possible scenarios in the following section.
\section{Maximisation of the total saved capital}
From our model setup it is clear that the total saved capital will depend on the transactions between the individual and the collective accounts as explained in the previous sections. Therefore, the way the transactions are performed will impact the optimisation procedures. In this section we consider three scenarios. First, we assume that the collective account always has enough units to cover all claims from the individual account. 
In the second scenario, in case of an insufficient number of units in the collective account to cover individual claims, no units are transferred from the collective fund. Third, we use the redistribution index $J$ in order to specify the number of units that can be transferred to an individual despite the deficit in the collective account.
\subsection{The collective fund is never empty \label{notempty}}
In this section we show the mathematical aspects of the maximisation of the total capital at retirement point assuming that the collective account always have sufficient number of units to be transferred into the individual account.This is a huge mathematical simplification but if it can be ensured that the collective account never ruins, then this is, in fact, satisfied.\\
Following a Markovian structure and the saved amount in every single period is optimised independently from the past.\\
In order to find the optimal choice of the boundary $k_{t-1}$ for the period $[t-1,t]$, we will make use of a recursive backward search, starting our considerations in the period $[T-1,T]$. 
\\The wealth in the individual account of a policyholder consists of the part $\gamma\in[0,1]$ of the premia $\pii$ paid to it, the investment returns, and the amount transferred to or from the collective account.
\begin{align*}
V_t&=\gamma\pii +\eta_{t-1}H_t-\one_{[\rho_t>k]}\frac14 V_{t-1}\Big(\rho_t-k\Big)
\\&\quad{}+ \one_{[\rho_t<-k]}\frac12 V_{t-1}\Big(-k-\rho_t\Big)\;.
\end{align*}
The wealth of the $j$th individual is denoted by $V^j$ (or simply by $V$ in case we refer to a representative individual) and the wealth of the collective fund is described by its investment returns, the part $1-\gamma$ of all the premia $\pia$, and the gains or losses from the transactions with all the individual accounts
\begin{align*}
C_t&=(1-\gamma)\pia +\theta_{t-1}H_t+
\\&\quad{}\sum_{j} \Bigg(\one_{[\rho_t > k]}\frac14 V^j_{t-1}\Big(\rho_t-k^j\Big)- \one_{[\rho_t<-k^j]}\frac12 V^j_{t-1}\Big(-k^j-\rho_t\Big)\Bigg)
\end{align*}

Our target criterion for an individual is given by
\[
 A(k_1,\dots,k_T) := \mE\big[V_T\big]\;
\]
which is to be maximised over all possible choices of boundaries $k_1,\dots,k_T$ at every point of time for every individual\footnote{Optimal boundaries will be chosen by the insurance company for the individuals and not by the individuals themselves.} where the boundary $k_t$ is decided at time $t-1$, i.e.\ $\mathcal F_{t-1}$-measurable. \\
Due to the assumption that the collective account does not ruin, the choices for the boundaries are not influenced by the choices for other individuals. In order to show this we define the function
 \begin{align*}
\Psi_1(k) :&= \mE\Big[\rho_T-\frac{1}{4}(\rho_T-k)^++\frac12(-\rho_T-k)^+\Big]
            \\ &= \mE\Big[\frac{V_T - V_{t-1}-\gamma\pii}{V_{t-1}}\Big]
 \end{align*}
 for $k\in[0,1]$. We make an observation regarding the maximum of $\Psi_1$ first.
\begin{Lem}\label{lem:1}
Define for arbitrary $1<a<b$
\begin{align}
\Xi(k):=\mE\Big[\rho_t+\frac1a \Big(-\rho_t-k\Big)^+-\frac1b \Big(\rho_t-k\Big)^+\Big], \label{xi}
\end{align}
then the maximum is attained either at $k=1$ or at the minimal $k$ allowed by the profitability condition. If the profitability condition is not assumed, then the maximum is attained in either $0$ or $1$.
\end{Lem}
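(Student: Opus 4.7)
The key observation is that $\Xi$ is quasi-convex on $[0,1]$, so its maximum over any closed subinterval is attained at a boundary point. Since $\mE[\rho_t]$ does not depend on $k$, we can disregard this constant. For the remaining part, differentiating under the expectation---justified by dominated convergence, since $|(-\rho_t-k)^+|\le(-\rho_t)^+$ and $|(\rho_t-k)^+|\le\rho_t^+$ for $k\in[0,1]$ with $\mE|\rho_t|<\infty$---one obtains, with $p(k):=\mP(\rho_t<-k)$ and $q(k):=\mP(\rho_t>k)$,
\[
\Xi'(k)=\frac{q(k)}{b}-\frac{p(k)}{a}.
\]
An interior critical point $k^*\in(0,1)$ therefore satisfies $q(k^*)/p(k^*)=b/a$.

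The main technical step is to verify that every such critical point is a strict local minimum. Writing $\varphi$ for the density of $\rho_t$, one computes $\Xi''(k^*)=\varphi(-k^*)/a-\varphi(k^*)/b$. Using the lognormal representation $\rho_t=e^{\mu+\sigma Z}-1$ with $Z\sim N(0,1)$, and setting $\alpha(k)=(\log(1-k)-\mu)/\sigma$ and $\beta(k)=(\log(1+k)-\mu)/\sigma$, the Mills-ratio identities $\Phi(\alpha)=\phi(\alpha)M(-\alpha)$ and $1-\Phi(\beta)=\phi(\beta)M(\beta)$ with $M(x)=(1-\Phi(x))/\phi(x)$ allow one to reduce $\Xi''(k^*)>0$ (after using $a\,q(k^*)=b\,p(k^*)$) to the single inequality
\[
(1+k^*)\,M(\beta(k^*))>(1-k^*)\,M(-\alpha(k^*)).
\]
For nonnegative drift $\mu\ge0$ this is clean: $-\alpha(k)-\beta(k)=(2\mu-\log(1-k^2))/\sigma\ge0$ on $[0,1]$, hence $M(\beta(k))\ge M(-\alpha(k))$ by monotonicity of $M$, and combined with $1+k>1-k$ for $k\in(0,1]$ the inequality follows. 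Equivalently, the same computation shows that $k\mapsto q(k)/p(k)$ is strictly increasing on $[0,1]$, so $\Xi'$ crosses zero at most once and does so from negative to positive.

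It follows that $\Xi$ is either monotone or first strictly decreasing and then strictly increasing on $[0,1]$. In either case, the maximum over a closed subinterval of $[0,1]$ is attained at one of its endpoints. Under the profitability condition, the feasible set is a subinterval containing $k=1$ (since the constraint \eqref{profit} becomes slack as $k\to 1$), and hence the maximum is attained either at $k=1$ or at the smallest feasible $k$; without the profitability condition, the feasible set is the full interval $[0,1]$ and the maximum is at $0$ or $1$. The delicate piece of the argument is the Mills-ratio inequality above, which is the only place where the specific lognormal structure enters substantively; everything else is a bookkeeping exercise in differentiation under the integral sign.
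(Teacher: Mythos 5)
Your derivative computations coincide with the paper's: writing $p(k)=\mP(\rho_t<-k)$ and $q(k)=\mP(\rho_t>k)$, both proofs arrive at $\Xi'(k)=q(k)/b-p(k)/a$ and $\Xi''(k)=\varphi(-k)/a-\varphi(k)/b$, and both aim to show that $\Xi$ has no interior local maximum, hence is monotone or decreasing-then-increasing, so that the maximum sits at an endpoint of the feasible interval. The routes differ at the crucial step. The paper evaluates $\Xi'''$ at the zeros of $\Xi''$ and uses the strict monotonicity of $k\mapsto 2\sigma^2-2\mu+(1+k)\ln(1-k)+(1-k)\ln(1+k)$ to show that $\Xi''$ changes sign at most once, from $+$ to $-$; combined with $\Xi'(1)>0$ this yields unimodality of $\Xi'$ and quasi-convexity of $\Xi$, with no restriction on $\mu$. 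You instead try to show directly that every interior critical point of $\Xi$ is a strict local minimum, via the Mills-ratio inequality $(1+k)M(\beta(k))>(1-k)M(-\alpha(k))$.

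The gap is that you only verify that inequality under the extra hypothesis $\mu\ge 0$, which appears nowhere in the lemma or in the paper. Moreover, the ``equivalent'' reformulation you offer as a substitute --- that $k\mapsto q(k)/p(k)$ is strictly increasing on all of $[0,1]$ --- is actually false for $\mu<0$: at $k=0$ one has $-\alpha(0)=\mu/\sigma<0<-\mu/\sigma=\beta(0)$, so $M(-\alpha(0))>M(\beta(0))$ and $(q/p)'(0)<0$. So the argument as written does not cover negative drift, and the suggested route for closing it cannot work globally. The repair, however, is one line and arguably makes your approach cleaner than the paper's: you only need the inequality \emph{at critical points}, and at a critical point $q(k^*)/p(k^*)=b/a>1$ forces $q(k^*)>p(k^*)$, i.e.\ $\Phi(-\beta(k^*))>\Phi(\alpha(k^*))$, i.e.\ $-\alpha(k^*)>\beta(k^*)$, for every value of $\mu$; monotonicity of $M$ then gives $M(\beta(k^*))>M(-\alpha(k^*))$, and with $1+k^*\ge 1-k^*$ you conclude $\Xi''(k^*)>0$ as intended. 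A smaller point, which the paper also glosses over: that the feasible set under the profitability condition is an interval containing $1$ needs a word --- it holds because the left-hand side of \eqref{profit} equals $\Xi(k)-\mE[\rho_t]$ for the choice $a=2$, $b=4$, hence is itself quasi-convex, and is nonpositive at $k=1$ since $\rho_t\ge -1$ almost surely; your ``the constraint becomes slack as $k\to1$'' only gives feasibility of $k=1$, not convexity of the feasible set.
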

\begin{proof}
Let $f$ denote the density of the random variable $\frac{H_T}{H_{T-1}}$, which is given by
\begin{align}
f(y)= \frac 1{\sqrt{2\pi}y\sigma}e^{-\frac{(\ln(y)-\mu)^2}{2\sigma^2}}\;. \label{density}
\end{align}
because $H$ is a geometric Brownian motion. The derivatives of $\Xi$ are given by
\begin{align*}
&\Xi'(k)=-\frac 1a\int_0^{1-k} f(y)\md y+ \frac 1b \int_{1+k}^\infty f(y)\md y,
\\&\Xi''(k)=\frac 1a f(1-k)-\frac 1b f(1+k)\;.
\end{align*}
therefore
\begin{align*}
&\Xi'(1)=\frac 1b \int_{2}^\infty f(y)\md y>0,
\\&\Xi''(0)=\Big(\frac1a-\frac 1b\Big)f(1)>0,
\\&\Xi''(1)=-\frac 1b f(2)<0\;.
\end{align*}
Assume now $k^0:=\inf\{k\in[0,1]: \; \Xi''(k)< 0\}$, i.e. $\Xi''(k^0)=0$ and $\Xi''(k)\ge 0$ for $k\le k^0$. Inserting $\frac1a f(1-k^0)=\frac 1b f(1+k^0)$ into $\Xi'''$ yields
\[
\Xi'''(k^0)=\frac{f(1-k^0)}{\sigma(1-k^0)(1+k^0)}\Big[2\sigma -2\mu+(1+k^0)\ln(1-k^0)+(1-k^0)\ln(1+k^0)\Big].
\] 
The expression in quadratic brackets above is strictly decreasing in $k$, converging to $-\infty$ if $k$ approaches $1$, and has one zero point. Since $\Xi''$ needs to change the sign in order to be negative at $k=1$ and $k^0$ is a zero point, it must hold $\Xi'''(k^0)\le 0$ with $\Xi'''(k)<0$ for all $k>k^0$. It means that $\Xi''$ after becoming negative once, at $k^0$ will stay negative until $k=1$. Therefore, we can conclude for $\Xi'$ that it has a unique local maximum at $k^0$, it is strictly increasing before $k^0$ and strictly decreasing thereafter. Since $\Xi'(1)>0$ we find that it has at most one zero, i.e.\ it is strictly positive after its zero and stricly negative before its zero or alternatively, strictly positive everywhere. Thus, $\Xi$ is strictly decreasing until it reaches its minimum at the zero of $\Xi'$ and strictly increasing thereafter or alternative $\Xi$ is strictly increasing everywhere. In the latter case its maximum is attained at $k=1$ and in the former case the maximum is either attained in $k=1$ or in the minimal $k$ allowed by the profitability condition.
\end{proof}
\begin{Prop}
The optimal choice for the boundary $k$ is given by the maximiser $k$ of the function $\Psi_1$ and it does neither depend on the time nor the individual. That is, choosing $k_1,\dots,k_T$ such that they are equal and $k_1$ maximises $\Psi_1$ yields
   $$ \sup_{\tilde k_1,\dots \tilde k_T}A(\tilde k_1,\dots,\tilde k_T) = A(k_1,\dots,k_1). $$
Moreover, $k_1$ is either the maximal or minimal allowed value.
\end{Prop}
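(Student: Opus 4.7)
The approach is backward induction, exploiting that the dynamics of $V$ are affine in $V_{t-1}$ and that returns are i.i.d. First I would rewrite the recursion in a convenient form. Since $V_{t-1}=\eta_{t-1}H_{t-1}$, we have $\eta_{t-1}H_t=V_{t-1}(1+\rho_t)$ and hence
\begin{equation*}
V_t = \gamma\pii + V_{t-1}\Big(1 + \rho_t - \frac{1}{4}(\rho_t - k_t)^+ + \frac{1}{2}(-\rho_t - k_t)^+\Big).
\end{equation*}
A case distinction on the sign of $\rho_t$ shows that the bracketed multiplier is nonnegative whenever $\rho_t \ge -1$ and $k_t \in [0,1]$, so $V_t \ge 0$ almost surely by induction on $t$.

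For the terminal step I would compute $\mE[V_T \mid \mF_{T-1}]$. Because $k_T$ is $\mF_{T-1}$-measurable while $\rho_T$ is independent of $\mF_{T-1}$,
\begin{equation*}
\mE[V_T \mid \mF_{T-1}] = \gamma\pii + V_{T-1}\bigl(1 + \Psi_1(k_T)\bigr).
\end{equation*}
Since $V_{T-1}\ge 0$ and does not depend on $k_T$, the conditional expectation is pointwise maximised by the deterministic maximiser $k^*$ of the deterministic function $\Psi_1$. The function $\Psi_1$ coincides with $\Xi$ of Lemma \ref{lem:1} for the choice $a=2$, $b=4$, so $k^*$ exists and equals either $1$ or the smallest $k$ allowed by the profitability condition.

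The backward step then proceeds analogously. Assuming inductively that the optimally controlled tail satisfies $\mE[V_T\mid \mF_t]=\alpha_t+\beta_t V_t$ with $\beta_t=\prod_{s=t+1}^{T}(1+\Psi_1(k^*))>0$ and $\alpha_t$ an $\mF_t$-measurable nonnegative random variable depending only on past premia and returns, the tower property yields
\begin{equation*}
\mE[V_T \mid \mF_{t-1}] = \mE[\alpha_t\mid\mF_{t-1}] + \beta_t\Bigl(\gamma\pii + V_{t-1}\bigl(1+\Psi_1(k_t)\bigr)\Bigr),
\end{equation*}
which is again maximised by $k_t=k^*$; this preserves the inductive form with $\beta_{t-1}=\beta_t(1+\Psi_1(k^*))$. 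Iterating back to $t=1$ gives $\sup A = A(k^*,\dots,k^*)$; since $\Psi_1$ depends only on the distribution of a single return, the optimiser $k^*$ is independent of both $t$ and the individual, and the extremal-value claim follows directly from Lemma \ref{lem:1}.

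The main point to verify carefully is that no coupling occurs across time: this rests on (i) the i.i.d.\ structure of the returns $\rho_t$ so that $\Psi_1$ is the same at every date, (ii) the fact that the admissibility set imposed by the profitability condition depends only on the one-period return distribution and therefore applies identically at each step, and (iii) that $V_t\ge 0$, so maximising the affine-in-$V_t$ conditional expectation reduces to maximising the deterministic coefficient $1+\Psi_1(k_t)$. I anticipate no serious obstacle beyond this bookkeeping.
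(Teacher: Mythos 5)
Your proof is correct, and it reaches the conclusion by a genuinely different route from the paper's. The paper decomposes the target as a telescoping sum $A(k_1,\dots,k_T)=\sum_{t=1}^T \mE[\Delta V_t]$ and, using the tower property, writes each summand as $\gamma\pii+\mE[V_{t-1}\Psi_1(k_t)]$, arguing that a single choice of boundaries maximises every summand simultaneously; you instead run a backward dynamic-programming recursion on the value function, showing $\sup\mE[V_T\mid\mathcal F_t]=\alpha_t+\beta_t V_t$ with $\beta_t=\prod_{s>t}(1+\Psi_1(k^*))$. Both arguments rest on the same three ingredients you list at the end (i.i.d.\ returns, a time-independent admissibility set, and positivity of $V$ making the affine coefficient the only thing to maximise), and both reduce the extremal-value claim to Lemma \ref{lem:1} via the identification $\Psi_1=\Xi$ with $a=2$, $b=4$. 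What your version buys is a cleaner treatment of the cross-time coupling: the paper's summand-wise maximisation implicitly needs $\Psi_1(k^0)\ge 0$ (otherwise maximising $\mE[\Delta V_t]=\gamma\pii+\mE[V_{t-1}]\Psi_1(k^0)$ would call for \emph{minimising} $\mE[V_{t-1}]$, conflicting with the earlier summands), a sign condition the paper does not verify; your recursion only needs $1+\Psi_1(k^*)>0$, which you establish from the pointwise nonnegativity of the one-period multiplier, so your argument is valid regardless of the sign of $\Psi_1(k^*)$. The paper's route is shorter and avoids stating a value function; yours also hands you the explicit optimal value as a by-product. One small point of bookkeeping: your $\alpha_t$ is in fact deterministic (a finite geometric-type sum of $\gamma\pii$ terms), which makes the step $\mE[\alpha_t\mid\mathcal F_{t-1}]$ independent of $k_t$ immediate; it is worth saying this explicitly so the induction hypothesis is airtight.
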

\begin{proof}
Lemma \ref{lem:1} yields the additional statement.
\\We have
  \begin{align*}
     A(k_1,\dots,k_T) &= \sum_{t=1}^T \mE[\Delta V_t]
\end{align*}    
where $\Delta X_t:= X_t-X_{t-1}$ for any process $X$ and $t\geq 1$. We will see that there is a choice of boundaries which maximises each summand and, hence, maximises the sum. We have
  $$ \Delta V_t = \gamma\pii + V_{t-1}\frac{V_t-V_{t-1}-\gamma\pii}{V_{t-1}} $$
  which yields by the tower property
  \begin{align*}
     \mE[\Delta V_t] &= \gamma\pii +\mE\left[V_{t-1}\mE\left[\frac{V_t - V_{t-1}-\gamma\pii}{V_{t-1}}\Big|\mathcal F_{t-1}\right]\right] \\
      &= \gamma\pii +\mE\left[V_{t-1}\Psi_1(k_t)\right]
  \end{align*}
for any $t=1,\dots, T$ where the later equality holds due to the i.i.d.\ property of the returns. Since $V_{t-1}$ is positive we find that the maximiser $k^0$ of $\Psi_1$ is the optimal choice for $k_t$ when maximising $\mE[\Delta V_t]$. Thus, we have
   $$ \sup_{k_1,\dots,k_T}\mE[\Delta V_t] = \gamma\pii +\Psi_1(k^0)\sup_{k_1,\dots,k_{t-1}}\mE\left[V_{t-1}\right]. $$
Consequently, we find that $k_t = k^0$ is the optimal choice. Lemma \ref{lem:1} yields that $k^0\in \{0,1\}$.
 \end{proof}
As we have seen in the proof above, by looking at the terms depending on $k$ is sufficient to maximise
\begin{align}
M_1(k):=\{1-J_{T-1}\}\mE\Big[\frac12 \Big(1-k-\frac{H_T}{H_{T-1}}\Big)^+-\frac14 \Big(\frac{H_T}{H_{T-1}}-1-k\Big)^+\Big]\;.\label{M1}
\end{align}
\begin{Bsp}[Asymmetric boundaries]
\begin{figure}[t]
\includegraphics[scale=0.4, bb = -200 0 500 520]{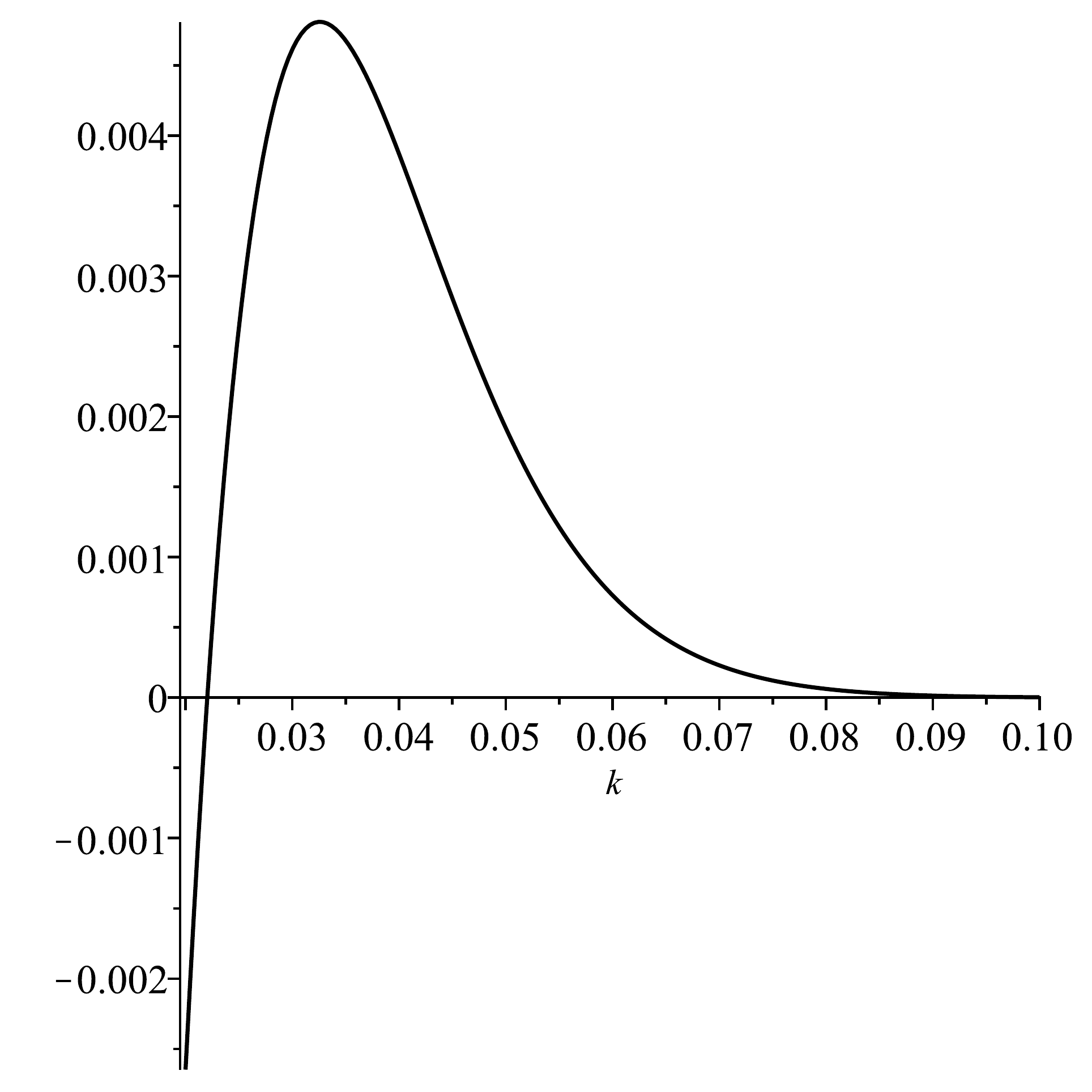}
\caption{The function $M_1(x)$.\label{2}}
\end{figure} 
If we drop the assumption of the symmetric boundaries 
and assume for instance $-k$ for the lower and $k\cdot p$ for the upper boundary with some given $p>1$, we can get the following result. Let $\mu:= 0.015$, $\sigma:=0.03$ and $p=2$, then maximizing the function
\[
M_p(k):=\{1-J_{T-1}\}\mE\Big[\frac12 \Big(1-k-\frac{H_T}{H_{T-1}}\Big)^+-\frac14 \Big(\frac{H_T}{H_{T-1}}-1-kp\Big)^+\Big]
\]
yields $k= 0.03257$ and correspondingly $kp=0.06515$, confer Figure \ref{2}. i.e. the policyholders should get help from the collective fund if the individual fund goes down more than $3.3\%$, and transfer money into the collective fund if the individual fund goes up more than $6.5\%$. However, as shown in Figure \ref{2}, the profitability condition required in \eqref{profit} is not fulfilled. Thus, $k=0.03257$ is not an admissible strategy. 
\end{Bsp}
As our target is to smooth the evolution of individual portfolios we need some kind of penalty for high volatility in order to obtain a maximum. Such a penalty function can be the expected realised volatility of the fund. Here, one has to decide if a relative or an absolute value should be considered. We follow the relative ansatz. It means we optimise
 $$ A(k_1,\dots,k_T) := \mE[V_t] - \alpha \mE\left[ \s_{t=1}^T \frac{1}{V_{t-1}}(V_t-V_{t-1}-\gamma\pii)^2 \right] $$
 for some weight $\alpha >0$ where we optimise the boundaries $k_1,\dots,k_T$ at every point of time. As such an $\alpha$ the insurance company may choose the desired proportion between the mean and the realised variation. 

We observe the following identity and define two functions
   \begin{align*}
       \frac{V_t - V_{t-1}-\gamma\pii}{V_{t-1}} &= \rho_T+\frac12 \Big(-\rho_T-k\Big)^+-\frac14 \Big(\rho_T-k\Big)^+,
    \end{align*}
    \begin{align*}   
       \Psi_1(k) &:= \mE\Big[\frac{V_t - V_{t-1}-\gamma\pii}{V_{t-1}}\Big], \\
       \Psi_2(k) &:= \mE\Big[\Big(\frac{V_t - V_{t-1}-\gamma\pii}{V_{t-1}}\Big)^2\Big],
   \end{align*}
 for any $k\in[0,1]$ which allow to simplify the optimisation problem.
\begin{Lem}
 We have
  $$ A(k_1,\dots,k_T) = \sum_{t=1}^T \mE\big[V_{t-1}\left(\Psi_1(k_t)-\alpha \Psi_2(k_t)\right)\big] $$
  for any choice of boundaries $k_1,\dots,k_T$. In particular, it is optimal to choose $k_1,\dots,k_T$ equal such that the expression 
   $$ \Psi_1(k_1)-\alpha \Psi_2(k_1) $$
   is maximised.
\end{Lem}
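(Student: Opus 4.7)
The plan is to prove the stated identity via the tower property together with the i.i.d.\ property of the returns, and then to read off the optimality claim as in the preceding proposition.

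For the decomposition I would split $A$ into the mean piece $\mE[V_T]$ and the quadratic-penalty piece and treat them separately. Telescoping gives $\mE[V_T] = V_0 + \sum_{t=1}^T \mE[\Delta V_t]$, and the computation already carried out in the proof of the previous proposition immediately yields $\mE[\Delta V_t] = \gamma\pii + \mE[V_{t-1}\Psi_1(k_t)]$. For the penalty I would use the identity displayed just before the lemma to write
\[
\frac{(V_t - V_{t-1}-\gamma\pii)^2}{V_{t-1}} = V_{t-1}\Big(\rho_t + \tfrac12(-\rho_t - k_t)^+ - \tfrac14(\rho_t - k_t)^+\Big)^2,
\]
and then condition on $\mathcal F_{t-1}$. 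Both $V_{t-1}$ and $k_t$ are $\mathcal F_{t-1}$-measurable while $\rho_t$ is independent of $\mathcal F_{t-1}$ with the law common to all $t$, so the inner conditional expectation coincides with $\Psi_2(k_t)$, producing the sum $\sum_{t=1}^T \mE[V_{t-1}\Psi_2(k_t)]$.

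Combining the two pieces gives $A(k_1,\ldots,k_T) = V_0 + T\gamma\pii + \sum_{t=1}^T \mE\big[V_{t-1}(\Psi_1(k_t) - \alpha\Psi_2(k_t))\big]$, which matches the lemma up to an additive constant independent of the boundaries and therefore irrelevant for the optimisation.

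For the optimality assertion, the argument parallels the previous proposition. The $V_t$-recursion shows $V_t>0$ a.s.\ (in each of the three cases the coefficient of $V_{t-1}$ is positive whenever $\rho_t>-1$, and $V_0>0$), so $V_{t-1}>0$ a.s. Writing $g := \Psi_1 - \alpha\Psi_2$ and letting $k^*\in[0,1]$ be a maximiser of $g$, the pointwise bound $g(k_t) \le g(k^*)$ together with $V_{t-1}\ge 0$ gives $\mE[V_{t-1} g(k_t)] \le g(k^*)\mE[V_{t-1}]$, with equality for $k_t\equiv k^*$. The main obstacle is the inter-temporal coupling through $V_{t-1}$, which depends on $k_1,\ldots,k_{t-1}$; as in the previous proposition this is handled by a backward induction starting at $t=T$, using that $V_{T-1}$ is $\mathcal F_{T-1}$-measurable and therefore unaffected by $k_T$, while the time-homogeneity of $g$ inherited from the i.i.d.\ returns ensures the same maximiser appears at every step, making the constant strategy $k_t\equiv k^*$ optimal.
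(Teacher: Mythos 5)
Your proof is correct and follows essentially the same route as the paper's: telescoping $A$ into per-period increments, applying the tower property together with the i.i.d.\ returns to identify each summand as $\mE\big[V_{t-1}(\Psi_1(k_t)-\alpha\Psi_2(k_t))\big]$ plus a constant, and concluding via backward induction that the constant strategy at the maximiser of $\Psi_1-\alpha\Psi_2$ is optimal. Your remark that the identity only holds up to the additive constant $V_0+T\gamma\pii$ is accurate --- the paper's own computation produces the same $+\gamma\pii$ terms that the lemma's statement silently drops --- and, as you note, this is irrelevant for the optimisation.
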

\begin{proof}
We have 
\begin{align*}
A(k_1,\dots,k_T) &= \mE\bigg[\sum_{t=1}^T \Delta V_t\bigg]\\
      & =\sum_{t=1}^T \mE\Bigg[V_{t-1}\Bigg\{ \mE\bigg[\frac{V_t-V_{t-1}-\gamma\pii}{V_{t-1}}\bigg|\mathcal F_{t-1}\bigg] \\&\quad{}- \alpha \mE\bigg[ \left(\frac{V_t-V_{t-1}-\gamma\pii}{V_{t-1}}\right)^2\bigg|\mathcal F_{t-1} \bigg]\Bigg\} +\gamma\pii \Bigg] \\
      & = \sum_{t=1}^T \Big(\mE\big[V_{t-1}\big(\Psi_1(k_t)-\alpha \Psi_2(k_t)\big)\big]+\gamma\pii\Big).
    \end{align*}
This shows that an optimal choice for $k_T$ is a maximum of the function $\Psi_1-\alpha\Psi_2$. A simple induction shows that $k_1=\dots=k_T$ with the above choice of $k_T$ is optimal.
 \end{proof}
 \begin{figure}[t]
\includegraphics[scale= 0.5, bb = -100 0 300 350]{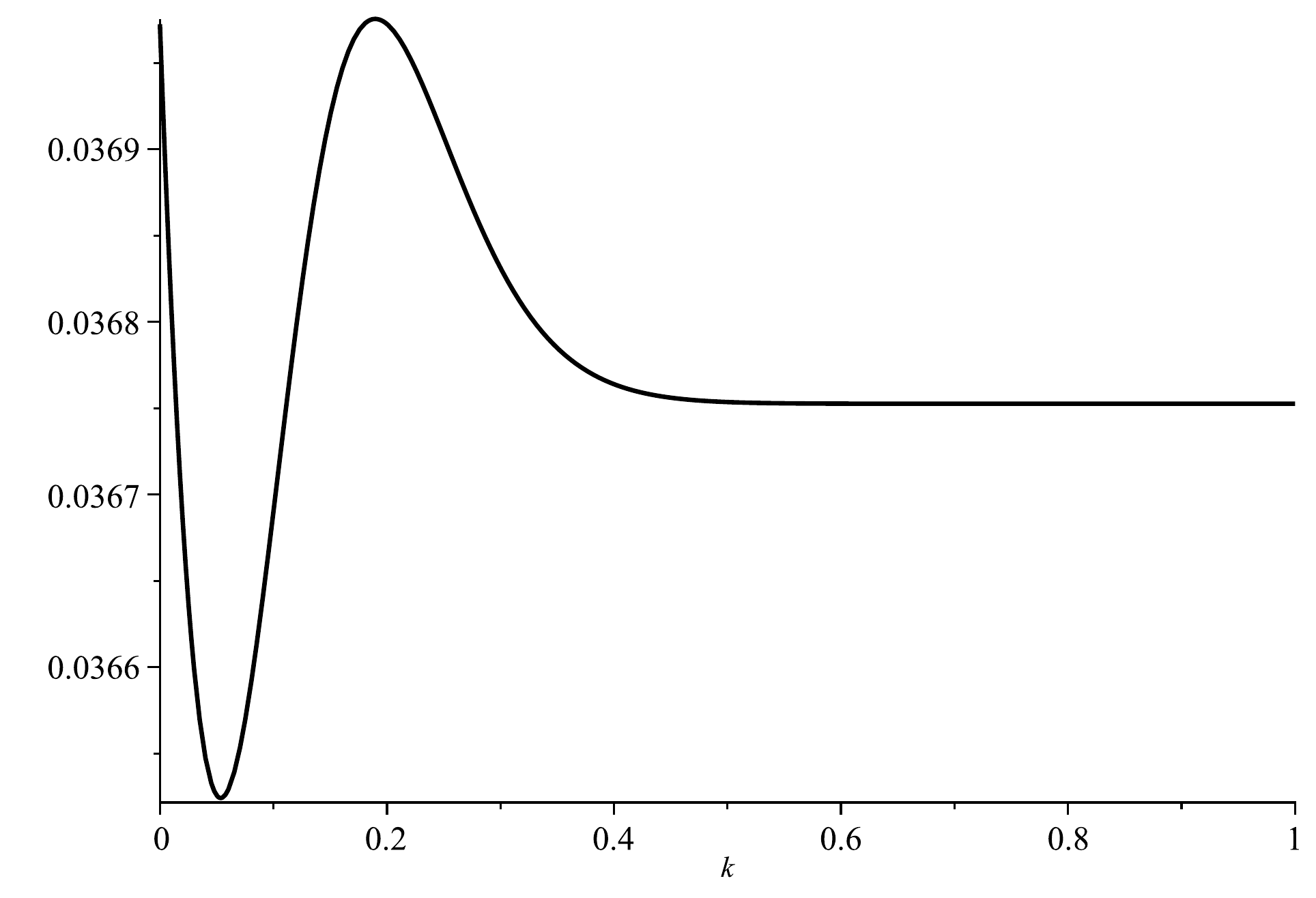}
\caption{The function $M_2(k)$ for $\sigma=0.092367$, $\mu= 0.06$, $\alpha= 2$ and $J=0.02$, yielding $M_2(0)=M_2(0.1897)$. The profitability condition \eqref{profit} is fulfilled on $[0,1]$.\label{fig:11}}
\end{figure}
From the preceding lemma we know that the optimal choice of boundary $k$ is the maximiser of the following (time-independent) functional:
\begin{align*}
M_2(k)&:=\Psi_1(k)-\alpha\Psi_2(k)
\\&=\bigg\{\mE\Big[\rho_T+\frac12 \Big(-\rho_T-k\Big)^+-\frac14 \Big(\rho_T-k\Big)^+\Big]
\\&\quad{}-\alpha\mE\Big[\Big(\rho_T+\frac12 \Big(-\rho_T-k\Big)^+-\frac14 \Big(\rho_T-k\Big)^+\Big)^2\Big]\bigg\}\;,
\end{align*}
\begin{Rem}
  The function $\Psi_2$ is strictly increasing. Lemma \ref{lem:1} shows that $\Psi_1$ is first decreasing until reaching its minimum and increasing thereafter. Consequently, $\Psi_1-\alpha \Psi_2$ is first decreasing and may start to increase at a later time but this cannot be before the minimum of $\Psi_1$. Thus, the maximum of $\Psi_1-\alpha \Psi_2$ as a function on $[0,1]$ is either attained in $0$ or after the minimum of $\Psi_1$.
\\It might happen, confer Figure \ref{fig:11}, that the maximum of $\Psi_1-\alpha \Psi_2$ cannot be uniquely defined, i.e. the  set ${\rm argmax}\{\Psi_1-\alpha \Psi_2\}$ contains at least two elements, say $k_1<k_2$. Recall that maximising the value of an individual account corresponds to the maximisation of the function $M_1$ defined in \eqref{M1} and leads to a bang bang strategy. Since, the individual accounts yield the main basis for the calculation of the initial pension, we choose $k_1$ if $M_1$ is decreasing and $k_2$ if $M_1$ is increasing in order to optimise the value of individual accounts.
\end{Rem}
\begin{Bsp}
Let us again assume that $\mu=0.045$, $\sigma=0.06$ and $\alpha=4$. The penalised function $M_2(k)$ is given in Figure \ref{fig2}. The maximum is attained at $k=0.1215$. The profitability condition \eqref{profit} is fulfilled for all $k\in[0,1]$. 
\begin{figure}[t]
\begin{minipage}[h]{\textwidth}
\includegraphics[scale=0.3, bb = -30 0  300 540]{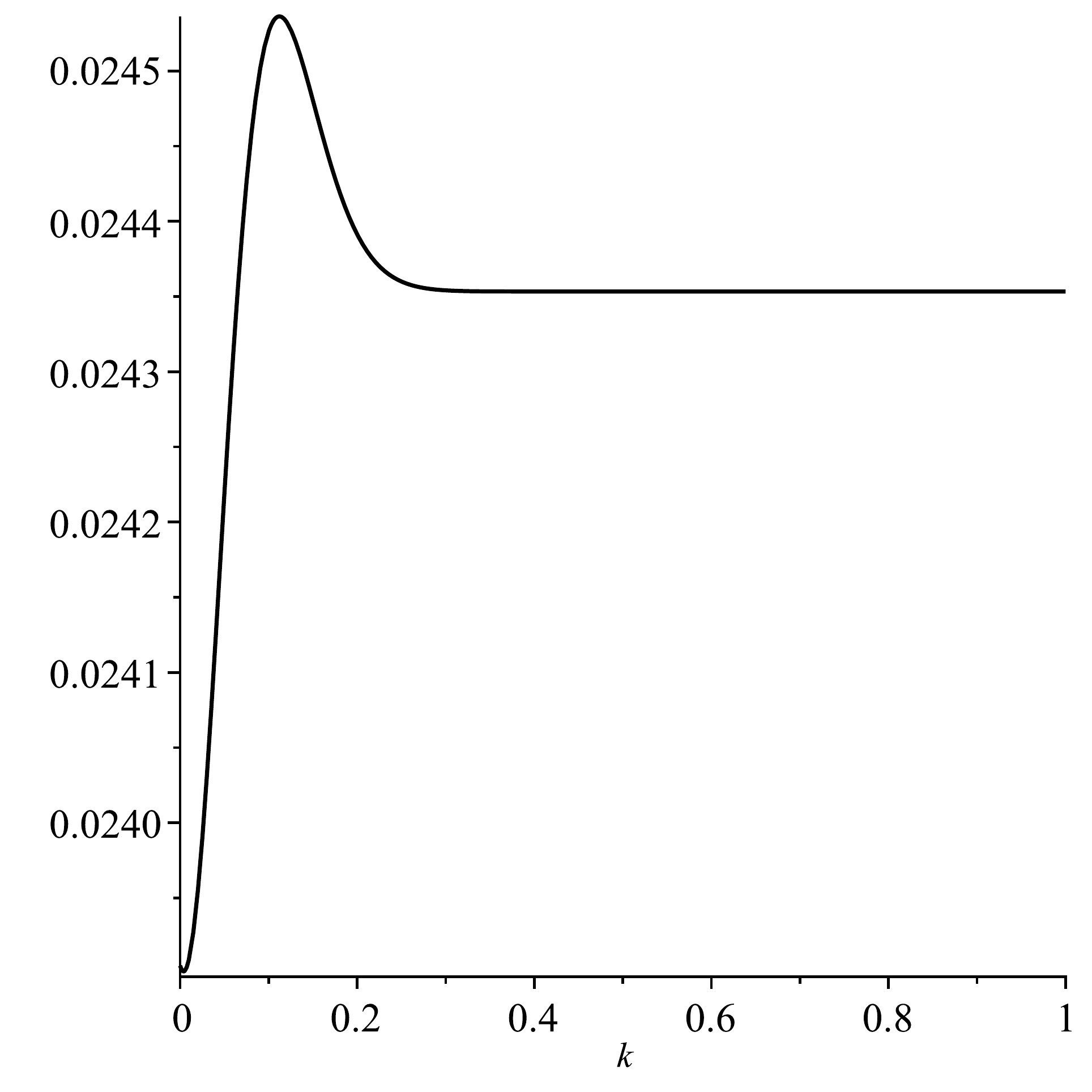}
\includegraphics[scale=0.3, bb = -320 0  450 540]{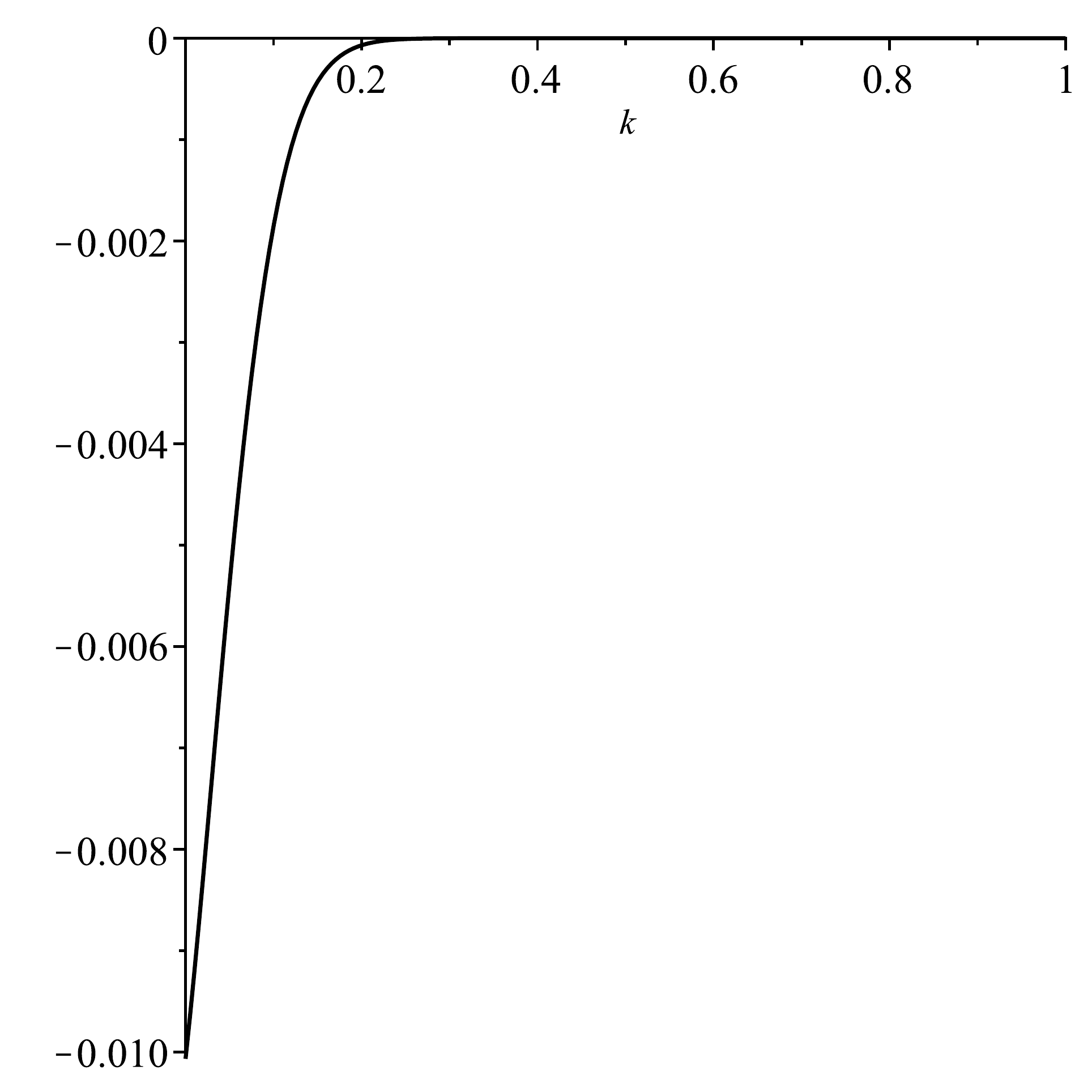}
\end{minipage}
\caption{$M_2(k)$ (left picture) and profitability condition (right picture).\label{fig2}}
\end{figure}
\end{Bsp}
\subsection{No help if the collective account does not have sufficient number of units} \label{sec2}
The problem of the assumption that the collective account can never become empty is the fact that neither insurance companies, nor the employer nor the state might be willing to cover the deficit if any. Therefore, in this section, we assume that in the case that the collective account does not have enough units in order to cover all claims at a particular point in time, no single claim will be paid. 
The evolution of the wealth for an individual $j$ is given by
\begin{align*}
V^j_t&=\gamma\pi_{\rm ind} +\eta_{t-1}H_t-\frac14 V^j_{t-1}\Big(\rho_t-k^j\Big)^+\nonumber
\\&\quad{}+ \frac12 V^j_{t-1}\Big(-\rho_t-k^j\Big)^+\one_{\big[2\theta_{t-1}(1+\rho_t)\geq \s_{i=1}^n\eta^i_{t-1}\big(-k^i-\rho_t\big)^+\big]}\;,
\end{align*}
where $n$ is the total number of the contracts in the insurance pool, $k^1,...,k^n$ and $\eta^1_{t-1},...,\eta^{n}_{t-1}$ are the boundaries and the number of shares in the individual accounts at that time respectively.\\ 
Note that here we index the boundaries by individuals rather by time but new thresholds can be chosen dynamically at discrete time points. The individual account under consideration is indexed by $j$. On the left hand side of the indicator we find twice the value of the collective account before any units are transferred and on the right hand side the total volume of all individual losses exceeding the individual thresholds. If the left hand side in the indicator is not bigger, then there is insufficient wealth to cover half of the individual excess losses. In that case, no one gets any help at all to prevent the collective fund to become empty.\\Our target is to optimise the expected return minus the relative realised quadratic variation for each individual, i.e.\ we aim at optimising
 $$ \mE[V^j_t] - \alpha \mE\left[ \sum_{t=1}^T \frac{1}{V^j_{t-1}}(V^j_t-V^j_{t-1}-\gamma\pii)^2 \right] $$
 for each individual. The problem here is the cross dependence among all individuals. 
One possibility could be that all individuals use the same barrier, chosen by the insurance company. 
We make a precise error analysis in the sense that we single out how much an individual can improve and show that this depends only on the fraction of its wealth compared to the collective wealth, which in a large community should be rather small.\\ We try to find a common choice of barriers such that no individual has an improved target value if all barriers are increased or decreased a bit. \\Our main result of this section, Theorem \ref{t:diagonal} below, states that it is optimal to choose the same barrier for all individuals and dynamically increase the barriers if the amount in the collective account is relatively low compared to the total wealth of all individual accounts.


Following the same arguments as in the previous sections we can see that it is optimal to optimise at each time step separately, i.e.\ at time $t$ for each individual $j$ we need to optimise
 $$ \mE[U^j_t-\alpha (U^j_t)^2] $$
where
 $$ U^j_t = \rho_t - \frac14(\rho_t-k^j)^++\frac{1}{2}\left(-k^j-\rho_t\right)^+\one_{\big[2\theta_{t-1}(1+\rho_t)> \s_{i=1}^n\eta^i_{t-1}\big(-k^i-\rho_t\big)^+\big]} $$
 and $k^1,\dots,k^n$ are to be chosen $\mathcal F_{t-1}$-measurable. Our target criterion implies that an optimal choice of barriers is a boundary value or
  $$ \sum_{i=1}^n \partial_{i_j} \mE[ U^j_t-\alpha (U^j_t)^2 ] = 0 $$
 for each $j$.

\begin{Rem}\label{R:z star}
  If a (possibly non-optimal) choice of thresholds has been made, then the indicator, as a function of $\rho_t$ is decreasing and, hence, there is some constant $z^*(k^1,\dots,k^n)$ such that 
   $$ \one_{\big[2\theta_{t-1}(1+\rho_t)\geq  \s_{i=1}^n\eta^i_{t-1}\big(-k^i-\rho_t\big)^+\big]} = \one_{[\rho_t \geq z^*(k^1,\dots,k^n)]}.$$
   
  Also we have
   \begin{align*}
      z^*(k^1,\dots,k^n) &= -\frac{2\theta_{t-1}+\sum_{i\in I}\eta^i_{t-1}k^i}{2\theta_{t-1}+\sum_{i\in I}\eta^i_{t-1}} \in[-1,0] \\
      I &:= \left\{j=1,\dots,n : 2\theta_{t-1} (1-k^j)-\sum_{i=1}^n \eta^j_{t-1}(k^j-k^i)^+ \geq 0 \right\}
\end{align*}    
   
   Note that $z^*$ is Lipschitz-continuous and its absolutely continuous derivative is given by
    $$ \partial_{k^j}z^*(k^1,\dots,k^n) = \frac{-\eta^j_{t-1} \one_{\{j\in I\}}}{2\theta_{t-1}+\sum_{i\in I}\eta^i_{t-1}}.$$

  In the particular case that all $k^i$ are equal we find $I=\{1,\dots,n\}$ and, hence, the following simplifications
   \begin{align*}
      z^*(k^1,\dots,k^1) &= -\frac{2\theta_{t-1}+\sum_{i=1}^n\eta^i_{t-1}k^1}{2\theta_{t-1}+\sum_{i=1}^n\eta^i_{t-1}}, \\
      \partial_{k^j}z^*(k^1,\dots,k^1) &= \frac{-\eta^j_{t-1}}{2\theta_{t-1}+\sum_{i=1}^n\eta^i_{t-1}}.
   \end{align*}
    
This reveals that if the same barrier $k$ is chosen for all individuals and if the $j$-th individual has negligible amount compared to the total amount of all other individuals plus the collective amount, then the $\partial_{k^j}$-derivative is negligible as well.
\end{Rem}


We can now formulate the main result of this section. Basically, we try to optimise the choice of $k$ under the constraint that all $k^j$ have to be equal. This does not allow to optimise for every individual but we quantify that each individual cannot improve by much if every individual has a small wealth in the scheme compared to the total wealth of the scheme.
\begin{Thm}\label{t:diagonal}
  Define $z(k) := -\frac{2\theta_{t-1}+k\sum_{i=1}^n\eta^i_{t-1}}{2\theta_{t-1}+\sum_{i=1}^n\eta^i_{t-1}}$ for $k\in [0,1]$ and 
   $$ N(c,k) := \E[h(c,k)-\alpha h(c,k)] $$
  where $h(c,k) := \rho_t-\frac14(\rho_t-k)^++\frac12(-k-\rho_t)^+\one_{\{\rho_t>c\}}$ for $k\in[0,1]$ and $c\in[-1,0]$. For a given value of $c\in [-1,0]$ we denote the maximiser of $ N(c,\dots)$ by $k(c)$.
   
Assume that there is $\bar k\in[0,1]$ such that $\bar k = k(z(\bar k))$. 
  
  Then choosing the barrier $k$, for each individual at time $t-1$ is near to the optimal in the sense that changing the barrier $k^j$ for the $j$-th individual does not improve its performance by more than 
   $$ \|f\|_{\infty}\frac{(1/2 + \alpha)\eta^j_{t-1}}{2\theta_{t-1} +\sum_{i=1}^n \eta^i_{t-1}} $$  
  where $\|f\|_{\infty}$ denotes the maximum of the continuous density of $\rho_t$.
\end{Thm}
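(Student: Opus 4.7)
The strategy is to control the one-step improvement individual $j$ can achieve by unilaterally deviating from $\bar k$; reduction to a single time step follows the usual argument. Let $c(k):=z^*(\bar k,\dots,\bar k,k,\bar k,\dots,\bar k)$ with $k$ in slot $j$; by Remark~\ref{R:z star}, $c(\bar k)=z(\bar k)$, and $j$'s mean minus second-moment objective becomes $N(c(k),k)$. The decisive telescoping is
\[
N(c(k),k)-N(z(\bar k),\bar k) = \bigl[N(c(k),k)-N(z(\bar k),k)\bigr] + \bigl[N(z(\bar k),k)-N(z(\bar k),\bar k)\bigr],
\]
in which the second bracket is $\le 0$ by the fixed-point hypothesis $\bar k=k(z(\bar k))$, since $k(\cdot)$ is by definition the maximiser of $N(z(\bar k),\cdot)$. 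Hence the improvement is controlled by the sensitivity of $N$ in its first argument alone.

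For this sensitivity I will prove $|N(c,k')-N(c',k')|\le (\tfrac12+\alpha)\|f\|_{\infty}|c-c'|$ for all $c,c'\in[-1,0]$ and $k'\in[0,1]$. Using $h(c,k')-h(c',k')=\tfrac12(-k'-\rho_t)^+[\one_{\{\rho_t>c\}}-\one_{\{\rho_t>c'\}}]$, both $h-h$ and $h^2-h^2$ are supported on the event $\{\min(c,c')<\rho_t\le\max(c,c')\}$; elementary pointwise bounds using $k'\in[0,1]$, $\rho_t>-1$, and $\rho_t\le\max(c,c')\le 0$ give $|h(c,k')-h(c',k')|\le\tfrac12$ and, via the factorisation $(h-h)(h+h)$, $|h(c,k')^2-h(c',k')^2|\le 1$ on that event. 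The claim then follows from $\mathbb P(\rho_t\in(\min(c,c'),\max(c,c')])\le\|f\|_{\infty}|c-c'|$. Finally, applying the derivative formula of Remark~\ref{R:z star} at the symmetric configuration yields $|c(k)-z(\bar k)|\le\eta^j_{t-1}/(2\theta_{t-1}+\sum_i\eta^i_{t-1})$, since $|k-\bar k|\le 1$; combining with the Lipschitz bound on $N$ produces the stated estimate.

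The main obstacle is uniformity of the Lipschitz bound on $c(k)$ over the whole interval $k\in[0,1]$, since the set $I$ of Remark~\ref{R:z star} depends on $k^j$ and can change as $k^j$ moves. The case analysis goes as follows: as long as $j\in I$, the explicit formula for $\partial_{k^j}z^*$ gives exactly the desired rate (and if the other indices remain in $I$ the denominator is $2\theta_{t-1}+\sum_i\eta^i_{t-1}$; if some drop out, one argues that they do so only when $k^j$ is driven sufficiently far from $\bar k$, and tracks the numerator accordingly); once $j$ itself leaves $I$, the derivative vanishes, so $c(k)$ is constant beyond that point and the accumulated deviation cannot exceed the bound already collected. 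The pointwise estimate for $|h^2(c,k')-h^2(c',k')|$ is the other place where some case distinction (on whether $\rho_t<-k'$) is needed, but once that is set up the calculation is routine.
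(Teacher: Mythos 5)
Your proposal follows essentially the same route as the paper's proof: freeze the ruin threshold at $c=z(\bar k)$, use that $\bar k$ maximises $N(c,\cdot)$ to dispose of one bracket, bound the other bracket by the $(1/2+\alpha)\|f\|_\infty$-Lipschitz continuity of $N$ in its first argument (the paper phrases this as $|N(c,k^j)-\E[U^j_t-\alpha(U^j_t)^2]|\le(1/2+\alpha)P(\rho_t\in[c,z^*(\bar k^j)])$, using exactly your pointwise bounds $1/2$ and $1$ on the event where the indicators differ), and control $|c(k^j)-z(\bar k)|$ via the derivative of $z^*$ from Remark \ref{R:z star}. One comment on the last step, which you rightly single out as the delicate point: your claim that indices leave $I$ ``only when $k^j$ is driven sufficiently far from $\bar k$'' is not true in general --- if $\bar k$ is close to $1$, then $2\theta_{t-1}(1-\bar k)$ is close to $0$ and the other individuals drop out of $I$ as soon as $k^j$ moves below $\bar k$, after which $|c(k^j)-z(\bar k)|$ can exceed $\eta^j_{t-1}/(2\theta_{t-1}+\sum_i\eta^i_{t-1})$ (e.g.\ $\bar k=1$, $k^j=0$ gives $|c(k^j)-z(\bar k)|$ close to $1$ when $\theta_{t-1}$ is small). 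The estimate is nevertheless rescued by a sign observation: $N$ is decreasing in its first argument, so the bracket $N(c(k^j),k^j)-N(z(\bar k),k^j)$ is positive only when $c(k^j)<z(\bar k)$, which forces $k^j>\bar k$; in that regime only $j$ itself can leave $I$, and a direct computation of $z^*$ in both sub-cases confirms $|c(k^j)-z(\bar k)|\le\eta^j_{t-1}/(2\theta_{t-1}+\sum_i\eta^i_{t-1})$. The paper's own proof asserts the Lipschitz bound on $z^*$ without any case distinction, so your write-up is, if anything, more attentive to the actual gap --- it just resolves it with the wrong reason.
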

\begin{proof}
  We choose the barriers $k^i = \bar k$ for any other individual $i\neq j$. For the $j$-th individual we are supposed to maximise the function
   $$ \E[U_t^j-\alpha(U_t^j)^2] $$
  over the possible values of $k^j\in [0,1]$ and its maximiser is denoted by $\bar k^j$.  We simply write $z^*(k^j)$ when we mean $z^*(k^1,\dots,k^n)$ as a function of $k^j$ and the other $k^i=\bar k$. Define $c:=z(\bar k)=z^*(\bar k)$. $\bar k$ is the maximiser of the function $ N(c,\cdot) $ and  
\begin{align*} 
|N(c,k^j) &- \E[U_t^j-\alpha(U_t^j)^2]| \\&\le \E\bigg[\frac12 \one_{\{\rho_t<-k^j,\rho_t\in[c,z^*]\}}+\alpha (|U^t_j|+|N(c,k^j)|)\frac12\one_{\{\rho_t<-k^j,\rho_t\in[c,z^*(\bar k^j)]\}}\bigg].
\end{align*}
Since $U,N$ are bounded by $1$ on $\{\rho_t<0\}$ we find
     $$ |N(c,k^j) - \E[U_t^j-\alpha(U_t^j)^2]| \leq (1/2+\alpha) P(\rho_t\in[c,z^*(\bar k^j)]). $$
    Remark \ref{R:z star} yields $|c-z^*(\bar k^j)| \leq |\bar k^j-\bar k| \frac{\eta^j_{t-1}}{2\theta_{t-1}+\sum_{i=1}^n\eta^i_{t-1}} \leq \frac{\eta^j_{t-1}}{2\theta_{t-1}+\sum_{i=1}^n\eta^i_{t-1}}$ and the result follows.
\end{proof}
The theorem suggests a simple algorithm to find a nearly optimal choice, namely to choose a sequence $\bar k_n$ and $c_n$ recursively via $\bar k_0 = 1$, $c_0 = 0$ (or any other starting values) and define recursively
 \begin{align*}
    c_{n+1} &:= z(\bar k_n), \\
    \bar k_{n+1} &:= k(c_{n+1})
 \end{align*}
for any $n\in\mathbb N$. The value $c_n$ is the threshold where it is expected that such a downfall of the underlying fund makes it impossible to cover all the losses from the individual accounts and $\bar k_n$ the barrier chosen for all the individuals.\footnote{Also, using the mean field game theory, where each individual has contributed an actually negligible amount compared to the whole collective, optimisation of the barrier in this particular case is roughly the same as ignoring the possibility of the collective fund to become empty. There, due to the fact that the value functions for all individuals are equal, the optimality can only be attained by choosing the same barrier.}
\subsection{Using a redistribution index if the collective account does not have sufficient number of units}
Another possibility to handle the situation of insufficient number of units in the collective account is to use the redistribution index. 
If the collective account does not have enough units, the policies with a deficit can, for instance, claim a number of units corresponding to their redistribution index. It means the individual account has the value 
\begin{align*}
&V_t=\gamma\pi_{\rm ind} +\eta_{t-1}H_t-\frac14 V_{t-1}\Big(\frac{H_t}{H_{t-1}}-1-k\Big)^+\nonumber
\\&\quad {}+ \frac12 V_{t-1}\Big(1-k-\frac{H_t}{H_{t-1}}\Big)^+\one_{\big[2\theta_{t-1}\frac{H_t}{H_{t-1}}> \s_{i=1}^n\eta^i_{t-1}\big(1-k^i-\frac{H_t}{H_{t-1}}\big)^+\big]}
\\&\quad {}+  \min\bigg\{J_{t-1}\theta_{t-1}H_t,\frac12 V_{t-1}\Big(1-k-\frac{H_t}{H_{t-1}}\Big)^+\bigg\}
\\&\quad\quad  {}\times \one_{\big[2\theta_{t-1}\frac{H_t}{H_{t-1}}\le \s_{i=1}^n\eta^i_{t-1}\big(1-k^i-\frac{H_t}{H_{t-1}}\big)^+\big]},
\end{align*}
where $k^j=k$. In Section \ref{sec2}, we prove that the optimal corridor boundary for the return, $k$, is the same for all contracts in the pool of contributors. It means, if the fund go down all individual accounts will produce claims simultaneously. However, the claim sizes depend on the number of shares in the individual accounts and differ from contract to contract. Therefore, some contracts might produce claims smaller than the number of units in the collective account corresponding to their redistribution index and vice versa. If the regulation requirements allow to entirely empty the collective account, the following recursive procedure can be applied, see also Figure \ref{fig:12}:
\begin{figure}[t]
\centering
\begin{minipage}[h]{\textwidth}
\includegraphics[scale=0.45, bb = 20 -150 200 380]{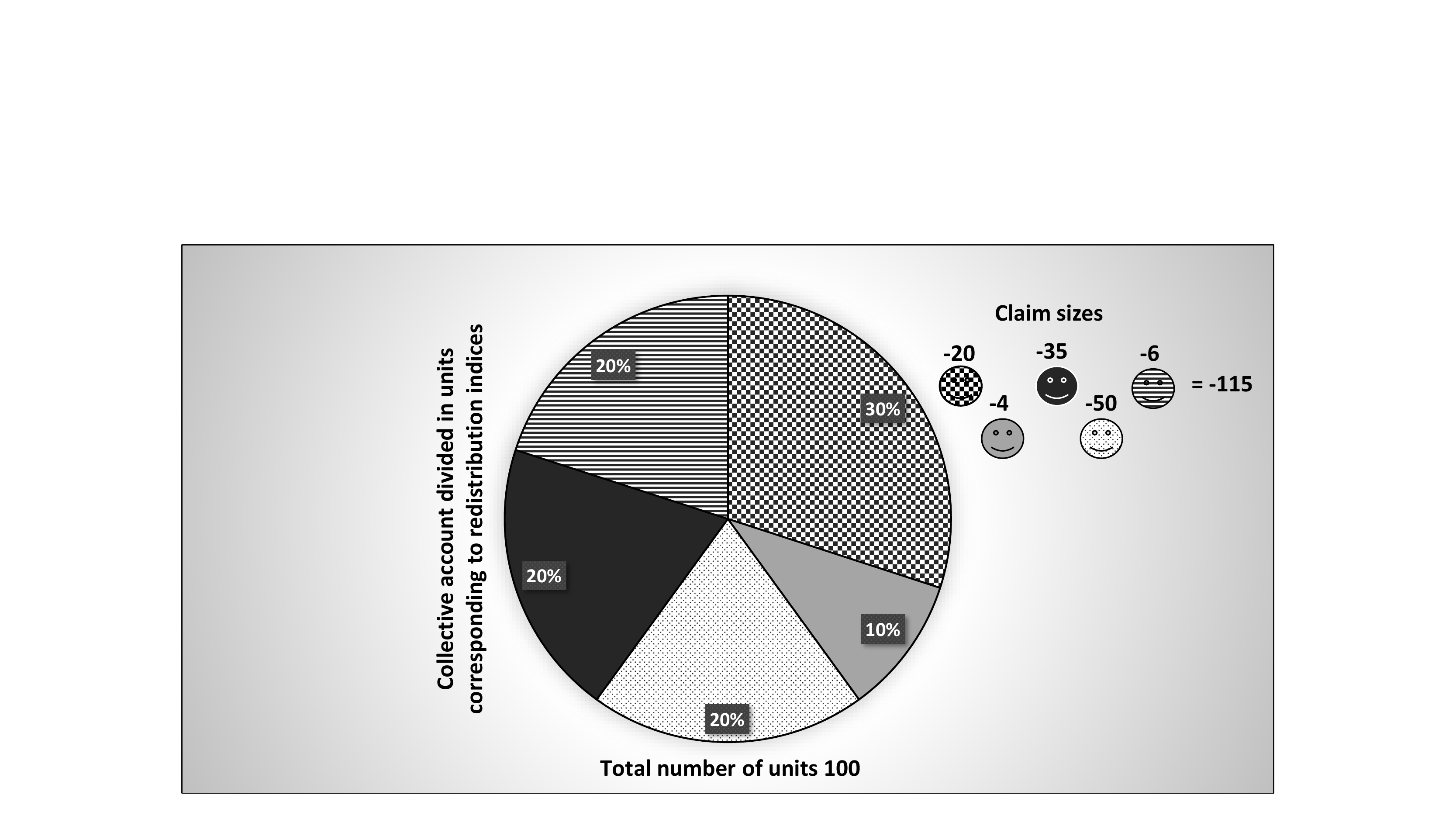}
\end{minipage}
\begin{minipage}[h]{\textwidth}
\includegraphics[scale=0.45, bb = 20 150 200 380]{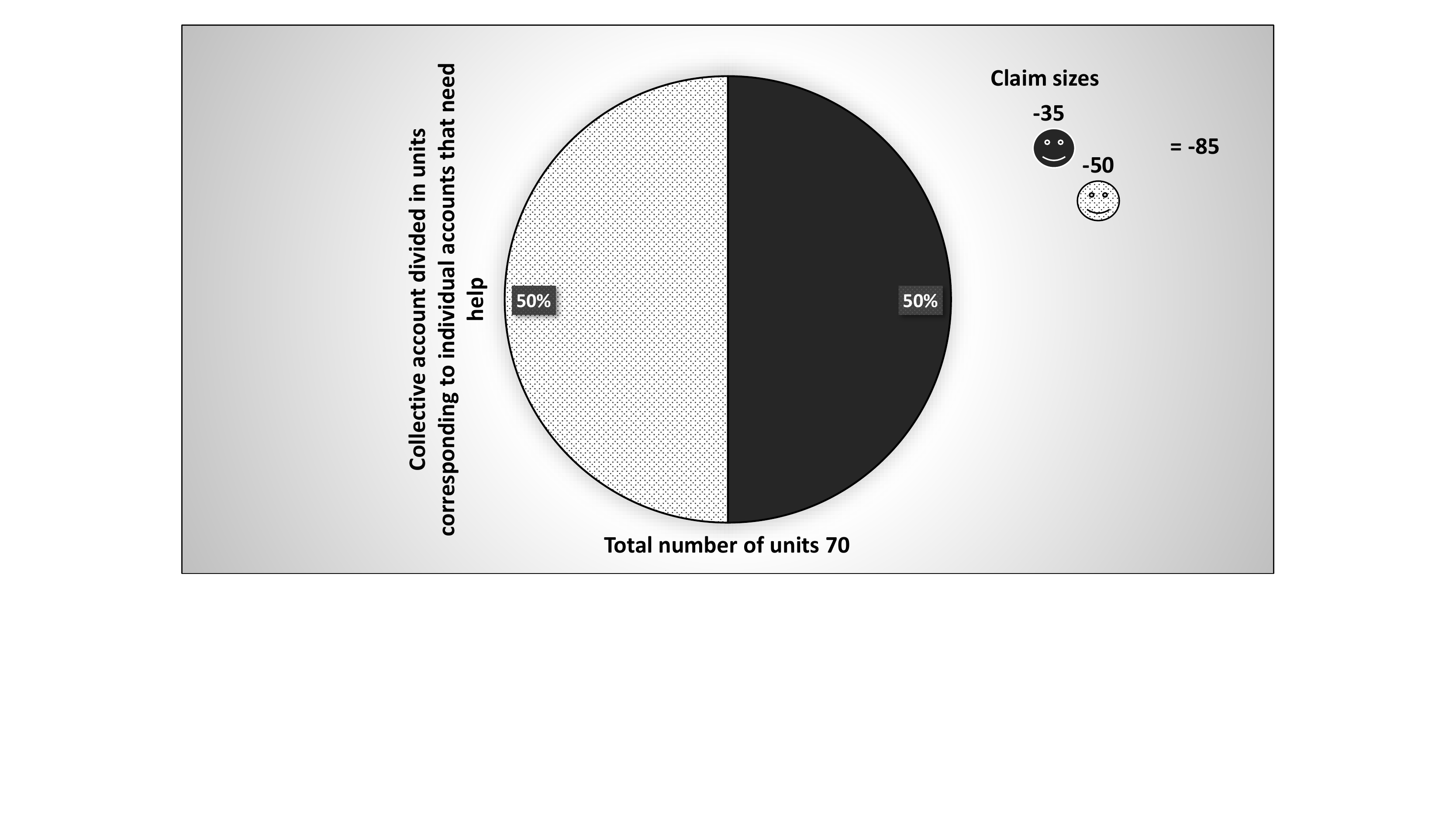}
\caption{Redistribution of the shares from the collective account by a recursive procedure. \label{fig:12}}
\end{minipage}  
\end{figure}
\begin{itemize}
\item Settle all individual claims that are below their redistribution part. \\In Figure \ref{fig:12} the claims amounting to $4$, $6$ and $20$ have the redistribution indices $0.1$ (yielding 10 shares), $0.2$ (yielding $20$ shares) and $0.3$ (yielding $30$ shares) respectively. It means these claims can be settled immediately. 
\item Adjust the redistribution indices of the remaining claims to the new number of claims and settle those that are now below their redistribution part.
\\In Figure \ref{fig:12}, after settling ``small'' claims in the first step, the collective account has $100-4-6-20=70$ shares on its disposal. The redistribution indices of the remaining two claims, amounting to $35$ and $50$ shares, equalled to $0.2$ in the first step and should be adjusted due to the new claim number of $2$. Therefore, the new redistribution indices are given by $0.5$ yielding $35$ shares. Thus, one claim can be completely covered. 
\item Proceed until all remaining claims exceed their redistribution part and eventually settle them. 
\\In our example, the collective account has now $35$ shares. The new redistribution index of the claim amounting to $50$ is now $1$. This contract gets just $35$ shares from the collective account which is now empty.
\end{itemize}
The above procedure serves just as an example and targets to showcase a possibility to handle the individual claims. Therefore, the presented numbers cannot be considered as realistic quantities. 
Also, it should be noted that redistributing all shares from the collective account between individual accounts might leave the next retiring cohort with small amounts of capital resulting from the collective account compared to the amount of premia they paid in if the size of the collective is not large enough. This would clearly violate the concept of fairness and require intergenerational smoothing mechanisms. 
\\On the one hand, the procedure of getting help from the collective account is a question of product design but on the other hand, it should also be in line with the regulations in place answering the intergenerational fairness and sustainability requirements. 
\\The above described recursion could also be applied on the returns of the collective account so that the main capital remains untouched. However, this procedure will contradict the primary mission of the collective account - to serve as a backup for the individual accounts. 
\medskip
\\Concerning the mathematical implementation of the scenarios described above, the method is similar to the one described in Section \ref{sec2}. Neither the value function nor the optimal strategy can be calculated explicitly. 
\section{Redistribution Index \medskip\\\large The individual share on the collective account -- A theoretical approach \label{reindex}}
In this section we discuss mechanisms for measuring the share each individual has on the collective account.
\\We will work in discrete time $\mathbb T=\{0,\dots,N\}$ for some $N\in\mathbb N$. There is a finite number $K$ of individuals which share a collective account $C$. At each time $t\in\mathbb T$ the $j$th individual contributes an amount $J^j_t$. By $C_t$ we mean the collective amount at time $t$ before individual contribution and $C_{t+} := C_t + \sum_{j=1}^K J^j_t$ denotes the collective amount after individual contribution.
\\
To avoid ambiguity we assume that $J^1_0 > 0$, i.e.\ the first individual contributes at time zero and we assume that $C_0 = 0$, i.e.\ there is no money in the collective account prior to any contribution.
\\
The total amount $C_t$ of the collective account belongs in parts to the individuals, each individual owns a fraction $\rho^t_j \in \mathbb R$ on the collective account and $\sum_{j=1}^K \rho_t^j = 1$, i.e.\ the absolute share of the $j$th individual is $\rho_t^j C_t$ and the fraction after contribution is $\rho_{t+}^j$. ($\rho^j_0$ is meaningless because $C_0=0$).
\\
We now introduce several rules which seem natural to impose.
\begin{enumerate}
  \item[(Cont.)] {\em Contribution rule:} If the $j$th individual adds the amount $J_t^j$ at time $t$, then its absolute share equals its prior absolute share plus the contribution, i.e.\ 
   $$ \rho_t^jC_t + J_t^j  = \rho_{t+}^jC_{t+},\quad t=0,\dots,N. $$
  \item[(Fix)] {\em Returns do not change the relative share:} $\rho_{t+1}^j = \rho_{t+}^j$ for any $t=0,\dots,N-1$, $j=1,\dots,K$, i.e.\ the relative shares stay fixed during times when no contribution are made. (With this rule we will sometimes write $\rho_{N+1}$ instead of $\rho_{N+}$.)
  \item[(Mon.)] {\em More contribution means more share:} If the $j$th total individual contribution at any time prior to some time point $t$ is higher than those of the $k$-th total individual contribution, then the $j$th individual has a higher share at time $t$ than the $k$-th individual, i.e.
   $$ \left(\forall s=0,\dots,t :    \sum_{n=0}^s J^j_n \geq \sum_{n=0}^s J^k_n\right) \quad\Rightarrow\quad \rho_{t+}^j \geq \rho_{t+}^k. $$
   \item[(Add)] {\em New policyholders which are added do not change the relative share of the existing policyholders:} If an additional individual is added at some point $t$ with no contributions strictly before $t$ and some contribution $J_t^{K+1}>0$ at time $t$, then its share at time $t$ is strictly positive and the relative shares $\rho_{t+}^j/\sum_{k=1}^K\rho_{t+}^k$ of the other individuals is the same with and without the introduction of the new contributor.
  \item[(Lin.)] {\em Linearity of the contribution:} If the $j$th individual contributed $x$-times as much as the $k$-th individual at time $t$, then its absolute share increases $x$-times as much, i.e.
   $$ J^j_t = x J^k_t \quad \Rightarrow \quad \left(\rho_{t+}^jC_{t+}-\rho_t^jC_t = x(\rho_{t+}^kC_{t+}-\rho_t^kC_t)\right).$$
\end{enumerate}
 \begin{Rem}
The contribution rule (Cont.) simply means that someone who contributes $x$ owes $x$ more from the collective amount at that given time. Rule (Fix) means that there are no changes to the relative shares if the collective amount gains of looses value, e.g.\ if someone owns $10\%$ of a building and the building gains or looses value due to external factors, then the $10\%$ share remains fixed. The monotonicity rule (Mon.) means that someone who has contributed more at any time up to a fixed time $t$ also owes more than someone who has contributed less. Rule (Add) means that the relative distribution of $K$ contributors is unaffected by an additional contributor. The linearity rule (Lin.) means that at each fixed time the absolute share increases linearly depending on a factor which is the same for every individual but this factor may depend on the number of individuals or the time.
 \end{Rem}
We believe that the (Fix) rule is very natural and violating it means that there is some redistribution mechanism between the participants even if none makes any contribution. Redistributing for no reasons seems to be unfair for us and we will always assume that (Fix) is in place.
\\
The contribution rule (Cont.) together with the (Fix) rule does in fact determine the structure of the wealth distribution of the collective amount completely. Together, they imply a unique mechanism for the shares which does satisfy the rules (Add) and (Lin.) but can fail the monotonicity rule (Mon.). The latter fails if the collective amount $C$ falls between time steps, i.e.\ $C_{t+1}-C_{t+} <0$ for some $t$. We do not claim originality of the next statement which is known due to its triviality but might not have been recorded somewhere.
\begin{Prop}\label{p:share}
We assume that (Fix) and (Cont.) hold. Then (Add) and (Lin.) hold. 
\\The dynamics of $\rho^j$ are uniquely determined. Moreover, the dynamics of $\rho^j$ can be described more statically in the following way:
\\
We denote by $I^j_t$ the index of the $j$th person at time $t$ which is defined by
     $$ I^j_t := 0 $$
     for any time $t$ strictly before the $j$th individual contributes to the collective, 
     \begin{align*}
        I_1^j &:= \frac{J_0^j}{J_0^1} 100
     \end{align*}
        for the index of the $j$th individual at time $1$,
      \begin{align*}
         I_{t+1}^j &:= I_{t}^j + \frac{J_{t}^j}{C_t}\sum_{l=1}^KI^l_{t}
      \end{align*}
      for any $t=1,\dots,N$. The relative share, resp.\ the absolute share of the $j$th individual at time $t=1,\dots,N+1$ is
       $$ \rho_{t}^j := \frac{I_{t}^j}{\sum_{l=1}^KI_{t}^l}, \quad\quad \rho_{t}^jC_{t} = C_{t} \frac{I_t^j}{\sum_{k}I^k_{t}}. $$
 \end{Prop}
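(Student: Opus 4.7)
The plan is to read (Cont.) together with (Fix) as a deterministic one-step update
\[
 \rho_{t+1}^j \;=\; \rho_{t+}^j \;=\; \frac{\rho_t^j C_t + J_t^j}{C_{t+}}, \qquad C_{t+} \;=\; C_t + \sum_{l=1}^K J_t^l,
\]
and then verify every remaining claim by unpacking this update. With the initial condition $C_0 = 0$ (so that $\rho_{0+}^j = J_0^j / \sum_l J_0^l$), the recursion determines $(\rho_t^j)$ uniquely, settling the uniqueness part of the statement. The rule (Lin.) is an immediate restatement of (Cont.) in the form $\rho_{t+}^j C_{t+} - \rho_t^j C_t = J_t^j$: absolute share increments equal contributions, so $J_t^j = x J_t^k$ scales the increments by the same factor.

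For (Add), I would compare the dynamics with $K$ contributors against the dynamics with an added $(K+1)$-th contributor who pays nothing strictly before time $t$. A short induction using the update formula shows that $\rho_s^{K+1} = 0$ for all $s \le t$ (prior to that contribution) and that $(\rho_s^j)_{j\le K}$ coincides in both scenarios up to that point. Introducing the extra contribution $J_t^{K+1} > 0$ at time $t$ enlarges $C_{t+}$ by exactly that amount while leaving the absolute shares $\rho_t^j C_t + J_t^j$ of the original $K$ individuals unchanged; their relative shares are therefore all rescaled by a common factor, so mutual ratios — and in particular the quantity appearing in (Add) — are preserved. Positivity of the newcomer's share is immediate from $\rho_{t+}^{K+1} C_{t+} = J_t^{K+1} > 0$.

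The bulk of the work is verifying the explicit index formula. Setting $S_t := \sum_l I_t^l$, I would prove $\rho_t^j = I_t^j / S_t$ by induction on $t$. The base case $t = 1$ follows directly from the definition of $I_1^j$ and from evaluating the one-step update at $t = 0$. For the inductive step the key algebraic identity is
\[
 S_{t+1} \;=\; S_t + \frac{S_t}{C_t}\sum_l J_t^l \;=\; S_t\,\frac{C_{t+}}{C_t},
\]
after which dividing $I_{t+1}^j = I_t^j + (J_t^j/C_t) S_t$ by $S_{t+1}$ and invoking the inductive hypothesis reproduces exactly the one-step update derived at the start, closing the induction.

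The only genuine nuisance I anticipate is the handling of the initial time: because $C_0 = 0$ the quantity $\rho_0^j$ is undefined, so the recursion has to be initialised at $t = 1$ with $I_1^j$ playing the role of the actual initial data, and the base case of the induction must be verified by hand. Beyond this edge case the proof reduces to transparent algebraic manipulations of the one-step update, with no compactness or analytic estimate required — consistent with the authors' remark that the statement is ``known due to its triviality''.
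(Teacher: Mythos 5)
Your proposal is correct and follows essentially the same route as the paper: derive the one-step update $\rho_{t+1}^j = (\rho_t^jC_t+J_t^j)/(C_t+\sum_l J_t^l)$ from (Cont.) and (Fix), note that uniqueness and (Lin.)/(Add) follow directly from it, and then check by the identity $\sum_l I_{t+1}^l = \bigl(\sum_l I_t^l\bigr)\,C_{t+}/C_t$ that the index recursion reproduces the same update. The only difference is cosmetic: you spell out the verification of (Lin.) and (Add) and the base case at $t=1$, which the paper dismisses as straightforward.
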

 \begin{proof}
From the contribution rule (cont.) together with the (Fix) rule we find that the new relative contribution is given by
    $$ \rho_{t+1}^j = \frac{\rho_t^jC_t+J_t^j}{C_t+\sum_{j=1}^KJ_t^j} $$
    for any $t\in\mathbb T$, $j=1,\dots,K$. This fully determines the relative share for each time $t=1,\dots,N$ because by the above formula one has
    $$ \rho_1^j = \frac{J_1^j}{\sum_{j=1}^KJ_1^j}. $$
It is straightforward to verify the linearity (Lin.) and the addition (Add) rule.
\\Now, using the indices above and $\rho_t^j := \frac{I_t^j}{\sum_{l=1}^KI_t^k}$ yields
     $$ \rho_{t+1}^j = \frac{I_{t}^j + \frac{J_{t}^j}{C_t}\sum^K_{l=1}I^l_{t}}{\sum_{l=1}^K\left(I_{t}^l+\frac{J_{t}^l}{C_t}\sum^K_{l=m}I^m_{t}\right)} = \frac{\rho_t^j + J_{t}^j/C_t}{1+\sum_{l=1}^KJ_{t}^l/C_t} = \frac{\rho_t^jC_t+J_{t}^j}{C_t+\sum_{l=1}^KJ_{t}^l}$$
and thus, this is the same $\rho$ as before. This shows that the rule described by $I$ yields the unique rule to determine the relative shares (given that (Fix) and (Cont.) are imposed).
 \end{proof}
\begin{Rem}
The number $100$ appearing in Proposition \ref{p:share} is arbitrary and can be replaced by any non-zero number.
\end{Rem} 
It turns out that (Fix) and (Cont.), which describe the relative share at each point uniquely, do not imply the monotonicity rule (Mon.).
\begin{Bsp}
We will consider the time set $\mathbb T=\{0,1\}$ and $K=2$ individuals. The first individual decides to pay $100$ at time $0$ while the second decides to pay $80$ at time $1$. The collective amount is assumed to fall by $25\%$ from time zero to $1$. We find
     $$ C_0 = 0,\quad  C_{0+} = 100,\quad C_1 = 75,\quad C_{1+} = 155. $$
From this we find the relative shares
     $$ \rho_1 = (100\%,0), \quad  \rho_2 = \left(\frac{75}{155}, \frac{80}{155}\right) $$
or in terms of the index $I$, rounded with two digit precision after the dot,
     $$ I_1 = (100,0), \quad I_2 = (100,106.67) $$
\end{Bsp} 
In the above example, we can see that the second individual has a higher relative share than the first despite his smaller and later contribution. For an investment which can be precisely evaluated at every point of time this makes sense as the second contributor was only contributing after the (disastrous) $25\%$ downfall. For an investment which cannot be evaluated precisely at every point of time this rule might be less favourable and the monotonicity rule could be what is wanted instead. This should especially be used when in a real situation the precise timing of a downfall cannot be determined. 
\medskip
\\Our result on possible wealth sharing agreement under the monotonicity rule uses again an artificial index. This time, however, the index does not depend on the underlying value and can be computed even if it is unknown. It also shows that there is some sort of ``artificial interest rate'' which is added to the index. The artificial index can theoretically be time-dependent, individual dependent and scenario dependent but we believe that in practice a fixed value (e.g.\ expected return of the investment) would be chosen instead.
\begin{Prop}\label{prop:Mon}
It is possible that (Fix), (Add), (Lin.) and (Mon.) hold at the same time (for this it does not matter how $C_{t+1}-C_{t+}$ behaves.)
\\We now assume that (Fix), (Add), (Lin.) and (Mon.) hold, that there is a function
      $$ F:\mathbb T\times \mathbb R^K\times\mathbb R\times \mathbb R\times \mathbb R \rightarrow \mathbb R$$
with
      $$ \rho_{t+1}^j = F\left(t,\rho_t, J^j_t, \sum_{j=1}^K J^j_t, C_t\right). $$
Also, assume that $F$ is strictly increasing in its third variable if the other variables are fixed. Then there are factors $a^ j_t\geq 0$ for $t=1,\dots,N$, $j=1,\dots,K$ and one can define indices $I^j_t$ via:
     $$ I^j_t := 0 $$
for any time $t$ strictly before the $j$th individual contributes to the collective. 
\begin{align*}
        I_1^j &:= J_0^j
\end{align*}
the $j$th individual index at time $1$ and
      \begin{align*}
         I_{t+1}^j &:= I_{t}^j(1+a^j_t) + J_t^j
      \end{align*}
and the relative share, resp.\ the absolute share of the $j$th individual at time $t$ is
       $$ \rho_{t}^j := \frac{I_t^j}{\sum_{l=1}^KI^l_{t}}, \quad\quad \rho_{t}^jC_{t} = C_{t} \frac{I_t^j}{\sum_{l=1}^KI^l_{t}}. $$
 \end{Prop}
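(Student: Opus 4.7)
For the first claim, take $a^j_t := 0$ for all $j,t$; the recursion then reads $I^j_{t+1} = I^j_t + J^j_t$, so $I^j_t$ is the cumulative contribution of individual $j$ strictly before time $t$ (with $I^j_1 = J^j_0$). Setting $\rho^j_t := I^j_t / \sum_l I^l_t$, I would verify each rule by inspection: (Fix) is immediate because $I^j_t$ does not depend on $C$; (Lin.) and (Add) follow from the additivity and $C$-independence of the recursion; (Mon.) holds because the denominator is common across individuals while the numerator is monotone in cumulative contributions. Hence the four rules can coexist regardless of how $C_{t+1}-C_{t+}$ behaves.

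\textbf{Characterisation.} Fix $t$ and assume inductively that $\rho^j_1, \ldots, \rho^j_t$ already arise from indices satisfying the claimed recursion; I would extend to $\rho^j_{t+1}$ in four steps. \emph{Step 1 (ratio form via (Add))}: insert a phantom $(K{+}1)$th individual whose only contribution is $\varepsilon>0$ at time $t$; (Add) demands that the relative shares of individuals $1,\dots,K$ are preserved after dividing by $1-\rho^{K+1}_{t+}$. Iterating this, and exploiting that in the functional form $F$ the dependence on the $j$th individual's contribution enters only through $J^j_t$, one deduces that $\rho^j_{t+1}$ admits the ratio form $w^j_t / \sum_l w^l_t$, where $w^j_t$ depends only on the $j$th individual's own history and on scenario-level data. \emph{Step 2 (affine structure via (Lin.))}: (Lin.) asserts that $\rho^j_{t+} C_{t+} - \rho^j_t C_t$ is proportional to $J^j_t$ with a proportionality factor common to all contributors; combined with Step 1 this forces $w^j_t = \alpha^j_t I^j_t + \beta_t J^j_t$ for some coefficients, with $\beta_t \neq 0$ by the assumed strict monotonicity of $F$ in its third argument. \emph{Step 3 (normalisation)}: since only ratios of $w^j_t$ matter, rescale so that $\beta_t = 1$ and set $a^j_t := \alpha^j_t - 1$; the resulting $I^j_{t+1} := (1+a^j_t)I^j_t + J^j_t$ reproduces $\rho^j_{t+1}$. \emph{Step 4 ((Mon.) gives $a^j_t \geq 0$)}: choose two individuals $j,k$ with identical cumulative contributions through time $t+1$ but with $j$ contributing strictly earlier; applying (Mon.) with the roles reversed forces $\rho^j_{t+1} = \rho^k_{t+1}$, which via the index recursion translates into a sign constraint, yielding $a^j_t \geq 0$ after varying the scenarios.

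\textbf{Main obstacle.} The crux is Step~1: (Add) in its stated form concerns only a single phantom contributor, so one has to apply it repeatedly—or pass to a limit after letting the phantom's contribution vanish—to extract a full weight-factorisation of $F$ valid for the original $K$ individuals. Once the ratio structure is in hand, Steps~2--4 reduce to an identification of coefficients and a routine monotonicity argument; conversely, without a functional form like $F$ (or without strict monotonicity in $J^j_t$) the same rules would admit genuinely non-affine sharing rules, which is where the extra hypothesis earns its keep.
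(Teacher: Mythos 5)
Your argument follows essentially the same route as the paper's: existence by direct verification of the index rule (the paper allows general $a^j_t\ge 0$ where you take $a^j_t=0$), and for the converse, using (Add) and (Lin.) to force $F$ to be affine in $J^j_t$, i.e.\ $F = G + H\,J^j_t$, then normalising into the index recursion and invoking (Mon.) together with strict monotonicity of $F$ for the sign constraint. One small correction to your Step 4: (Mon.) cannot be applied ``with the roles reversed'' there, since the later contributor's running totals do not dominate the earlier contributor's at intermediate times, so you obtain only the one-sided inequality $\rho^j_{t+1}\ge\rho^k_{t+1}$ rather than equality --- but that inequality is precisely what delivers $a^j_t\ge 0$, so the conclusion is unaffected.
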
 
 \begin{proof}
The rule implied by the index $I$ and $\rho$ as given by the index obviously satisfies the four rules. This shows that they can hold at the same time.
\\Now we assume that $\rho_t^j$ is given such that the four rules hold and define
     $$ I_0 := 0,\quad I_1 := J_0. $$
By the (Add) and (Lin) rule we find that a newly introduced contributor at time $t=1,\dots,N$ has an affine impact with its contribution on its share, i.e.\ $\rho_{t+1}^{K+1} = \beta_t + \gamma_t J_t^{K+1}$ for some constants $\beta_t,\gamma_t$ which depend on $C_t, C_{t+1}$ and the relative shares of the other contributors. Thus we find that
     $$ F(t,\rho_t,J_t,C_t) = G(t,\rho_t,C_t) + H(t,\rho_t,C_t) J^j_t $$
for some functions $G$, $H$ not depending on $J$. The monotonicity rule and the strictly increasing assumption on $F$ yield that $G \geq H> 0$. We define recursively
    \begin{align*}
       a_t^j := \frac{G(t,\rho_t,C_t)}{H(t,\rho_t,C_t)I_t^j} - 1 \geq 0, \\
       I_{t+1}^j := I_t^j(1+a_t^j) + J_t^j
\end{align*}   
and simply observe that by induction we have
      $$ \rho_t^j = \frac{I_t^j}{\sum_{j=1}^K I_t^j}. $$
\end{proof}
Proposition \ref{p:share} identifies the unique rule from the two rules (Fix) and (Cont.) which can be implemented very easily and it does not need any modelling assumptions on $C_t$ (i.e.\ how $C_{t+1}$ looks like given $C_{t+}$ and the scenario.) However, the rule introduces no penalty for late contribution other than that there is no participation on earlier gains (and losses). If risk is associated with $C$ and policyholders are risk-averse, then they tend to make late investments. For early contributions to become more appealing it seems natural to use the (Mon.) rule. Since (Fix.), (Add.) and (Lin.) are very natural rules one is obliged to use a setup as in Proposition \ref{prop:Mon}.
\section{Conclusions}
The life insurance companies are continuously creating new products to cope up not only with longevity but also with the period of protracted low interest rates. It is well-known that low interest rates affect investment opportunities and, in particular, have a significant adverse effect on insurers whose liabilities includes some benefit promises such as guarantees.\\
With the aim of offering an adequate level of benefits to the policyholders and at the same time preserving the long-term solvency of the plan, this paper analyses a new pension design applied to the accumulation phase from a mathematical point of view. Under the proposed design, we seek to maximise the accumulated capital at retirement by investing the premia into two funds: an individual and a collective. The collective fund acts as a buffer where some units are transferred to (from) the individual account when the performance of the individual fund is below (above) a particular barrier.\\
We prove that, in the case of symmetric boundaries for the corridor $[-k,k]$ and if the collective account never ruins, the optimal $k$ for the exchange of units between the individual and the collective account is given either by the lowest barrier allowed by the profitability condition or by $1$. If the barriers are asymmetric, we might have cases where no satisfactory results are obtained because the profitability condition is not fulfilled. 
\\In order to incorporate the possible risk into the target functional to maximise, we include a penalty function given by the expected realised volatility of the fund. We also show that in some occasions the maximum might not be unique. However, as the individual account is the main basis to calculate the initial pension, we choose the barrier that optimises the expected value of the individual account for the policyholder.\\
Due to the lack of analysis in both academia research and practice, this paper also analyses the possibility of not having enough units to be transferred to the individual account. The first solution analysed in this paper is the no-transfer of any units if the number of units is not enough to cover all claims. Due to the cross dependence among all individuals we make a precise error analysis where the same barrier is used by every policyholder. The same barrier is a suboptimal strategy but the explicit solution of an optimal $k$ would require lot of computation and the improvement for the policyholder would be negligible, i.e. the maximised amount at retirement would barely increase. Secondly, we describe a redistribution index that could cover some of the deficit of the individual claims and could be applied through a recursive procedure. In this paper, we also discuss the necessary and sufficient conditions for the redistribution index design.\\
This paper presents an innovative and attractive way to smooth the volatility of the fund in the accumulation phase. Hence, the proposed product design should be beneficial to both the life insurers $-$ as there are no benefit promises $-$ and policyholders $-$ as the amount of accumulated capital is more secure than in the case of risky investments and much higher than in the case of non-risky investments. \\
Finally, based on the model presented, at least three important directions for future research can be identified. First, another challenge in the accumulation phase is the maximisation of the retirement capital through an optimal splitting strategy of the premia into the two funds, i.e. individual and collective. Another avenue for future research would be to explore the redistribution index so that it ensures the intergenerational fairness among the members' plan. Third, it would be interesting to set up bounds for the pension amount during the retirement phase so that the retirees have a stable benefit level over time.
\subsection*{Acknowledgements}
Mar\'ia del Carmen Boado-Penas is grateful for the financial assistance received from the Spanish Ministry of the Economy and Competitiveness [project ECO2015-65826-P]. \medskip
\\The research of Julia Eisenberg was funded by the Austrian Science Fund (FWF), Project number V 603-N35.
\bibliographystyle{plain}

\end{document}